\newtheorem{theorem}{Theorem}[section]   
\acrodef{PDEs}{Partial Differential Equations}
\acrodef{LR-QA}{Low-Rank Query Attention}
\acrodef{FNO}{Fourier Neural Operator}
\acrodef{KAN}{Kolmogorov–Arnold Network}
\acrodef{LSFs}{Local Spatial Features}
\acrodef{WFNO}{Weighted Fourier Neural Operator}
\acrodef{SIREN}{Sinusoidal Representation Network}
\acrodef{FVGN}{finite volume informed neural network}
\acrodef{Q-former}{Querying Transformer}
\acrodef{CFD}{Computational Fluid Mechanics}
\acrodef{LRQ-Solver}{Low-Rank Query-based PDE Solver}
\acrodef{PCLM}{Parameter-Conditioned Lagrangian Modeling}
\acrodef{PCE}{Parameter-Conditioned Encoder}
\acrodef{MLP}{Multilayer Perceptron}
\acrodef{FEM}{Finite Element Method}
\acrodef{CAE}{Computer-Aided Engineering}
\acrodef{CAD}{Computer-Aided Design}
\begin{document}

\title{LRQ-Solver: A Transformer-Based Neural Operator for Fast and Accurate Solving of Large-scale 3D PDEs}


\author{
\IEEEauthorblockN{
Peijian~Zeng$^{}$\orcidlink{0009-0009-3352-3565}\textsuperscript{*}, 
Guan~Wang$^{}$\textsuperscript{*}, 
Haohao~Gu$^{}$, 
Xiaoguang~Hu$^{}$,
Tiezhu~Gao$^{}$,\\
Zhuowei~Wang$^{}$\orcidlink{0000-0001-6479-5154}\textsuperscript{\Letter}, 
Aimin~Yang$^{}$\orcidlink{0009-0002-1751-4801}\textsuperscript{\Letter}, 
Xiaoyu~Song$^{}$\orcidlink{0000-0002-6583-9400}, 
}\\

\thanks{This work was supported in part by the Key-Area Research and Development Program of Guangdong Province under Grant 2021B0101190004 and 2021B0101190003, and in part by the National Natural Science Foundation of China under Grant 62472106.}
\thanks{P. Zeng, Z. Wang, and A. Yang are with the School of Computer Science and Technology, Guangdong University of Technology, Guangzhou 510000, China (e-mail: {lil\_ken@163.com , zwwang@gdut.edu.cn , amyang18@163.com }).}

\thanks{G. Wang, H. Gu, X. Hu, and T. Gao are with Baidu Inc., Beijing 100085, China (e-mail: {wangguan12@baidu.com , guhaohao@baidu.com , huxiaoguang@baidu.com , gaotiezhu@baidu.com }).}

\thanks{X. Song is with the Department of Electrical and Computer Engineering, Portland State University, Portland, OR 97207, USA (e-mail: songx@pdx.edu ).}

\thanks{\textsuperscript{*} Peijian~Zeng and Guan~Wang are co-first authors.}
\thanks{\textsuperscript{\Letter} Zhuowei~Wang and Aimin~Yang are co-corresponding authors.}
}

\maketitle

\begin{abstract}
Solving large-scale \ac{PDEs} on complex three-dimensional geometries represents a central challenge in scientific and engineering computing, often impeded by expensive pre-processing stages and substantial computational overhead. We present \ac{LRQ-Solver}, a physics-integrated deep learning framework for fast, accurate, and scalable \ac{CAE} simulations of complex three-dimensional geometries in integrated circuit and system design. Built upon the \ac{PCLM} that embeds physical consistency into the learning process and the \ac{LR-QA} module that reduces attention complexity from $O(N^2)$ to $O(NC^2 + C^3)$ via covariance decomposition, \ac{LRQ-Solver} enables efficient multi-configuration analysis directly within \ac{CAD}-driven workflows. Evaluated on industrial benchmarks, it achieves a 38.9\% error reduction on DrivAerNet++ and 28.76\% on the 3D Beam dataset, with up to $50\times$ training speedup and support for simulations with 2 million points on a single GPU. By accelerating \ac{PDEs}-based \ac{CAE} tasks—such as thermal, mechanical, or electromagnetic analysis—\ac{LRQ-Solver} enhances the responsiveness and scalability of design automation pipelines. Code to reproduce the experiments is available at \href{https://github.com/LilaKen/LRQ-Solver}{https://github.com/LilaKen/LRQ-Solver}
\end{abstract}

\section{Introduction}

Physical phenomena across natural and industrial systems—from solar dynamo cycles to aircraft aerodynamics—are universally governed by \ac{PDEs}~\cite{vasil2024solar,santos2025self}. In engineering, critical applications such as vehicle aerodynamics and structural stress analysis rely fundamentally on PDE-based modeling~\cite{basar2001nonlinear,evans2022partial}. Accurate and efficient PDE solutions are indispensable for predicting complex nonlinear systems—from weather forecasting to nuclear simulations~\cite{spears2025predicting}—and for optimizing industrial designs. In \ac{CAD} workflows, where geometry is often parameterized and subject to frequent modifications, repeated high-fidelity \ac{PDEs} solves are required to evaluate performance across design configurations—posing a major bottleneck in design automation and rapid prototyping~\cite{Grossmann2007,solin2005partial}. However, analytical solutions remain intractable for most real-world problems, forcing reliance on numerical discretization methods that suffer from high computational cost, mesh generation overhead, and sensitivity to geometric complexity~\cite{solin2005partial,Grossmann2007}. Neural \ac{PDEs} solvers have emerged as a transformative alternative, learning operators from simulation data to deliver mesh-free, resolution-independent predictions in seconds~\cite{li2021fourier,gino,li2025geometric}, thereby enabling tight integration with \ac{CAD} environments and accelerating design-space exploration without repeated meshing or solver setup.

Despite their promise, neural solvers face two fundamental limitations in industrial deployment. \textbf{First, an accuracy bottleneck}: most architectures decouple global design parameters from local physical dynamics, failing to capture the interdependence between the system-level constraints, such as chassis length or material thickness—and field behavior. For instance, an A-pillar may exhibit benign flow separation for a compact car but trigger strong vortices for an extended wheelbase; similarly, a B-pillar fillet that evenly distributes stress at nominal thickness becomes a stress concentrator when thinned. Existing fusion strategies—e.g., GNOT’s point-wise embeddings~\cite{pmlr-v202-hao23c}, GINOT’s feature concatenation~\cite{liu2025geometry}, or Geom-DeepONet’s multiplicative fusion~\cite{geomdeeponet}—lack explicitly physical coupling, resulting in black-box parameter sensitivity and inconsistent generalization. \textbf{Second, an efficiency bottleneck}: scaling to million-point geometries is hindered by $O(N^2)$ attention complexity. Methods like Transolver++~\cite{luo2025transolverplus} rely on per-point clustering with linear overhead, rendering them infeasible for industrial-scale point clouds.

\begin{figure}[htbp]
    \centering
    \includegraphics[width=0.5\textwidth]{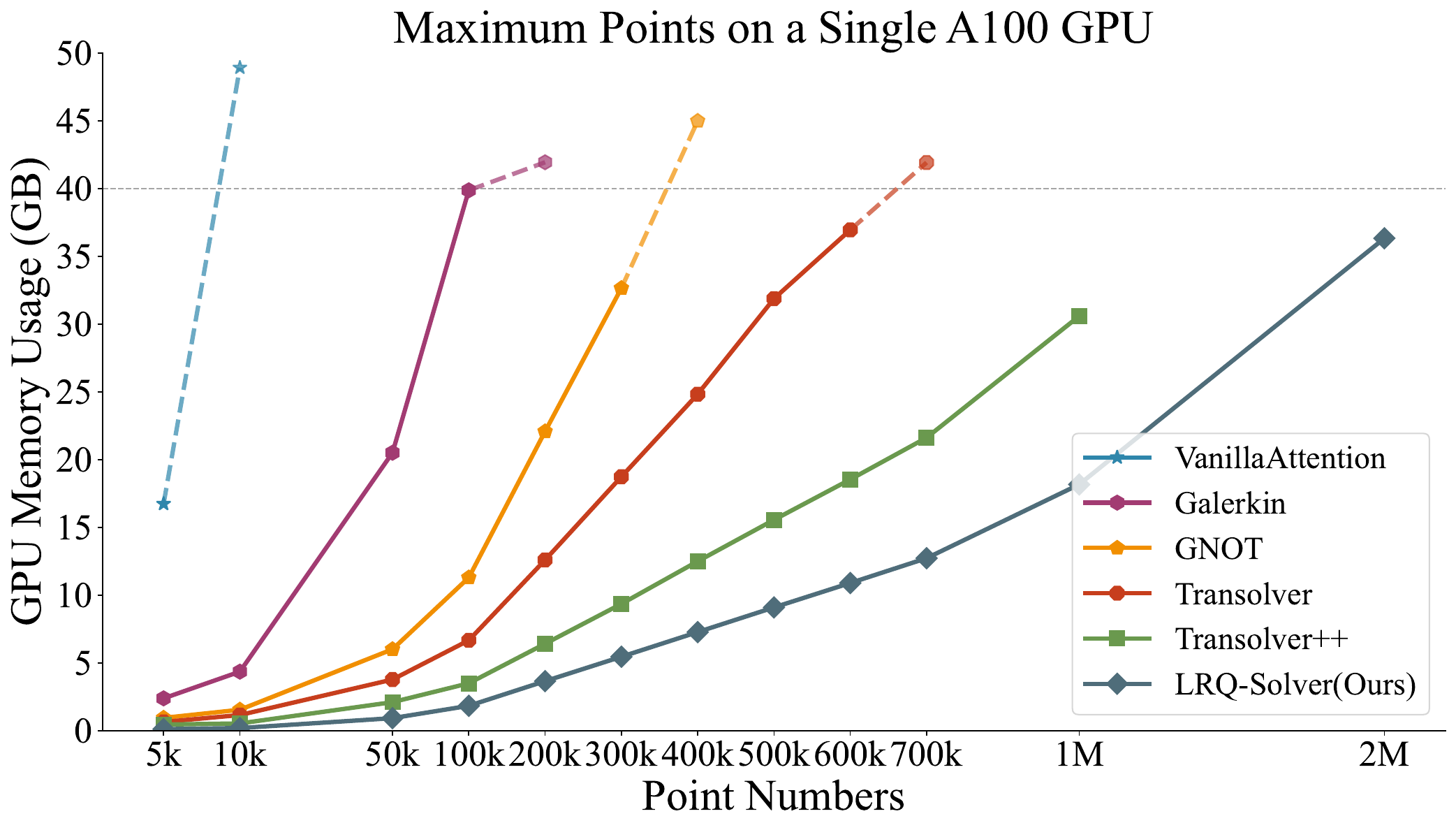}
    \caption{Comparison of model capability in handling large geometries.}
    \label{fig:model_performance_gpu_use}
\end{figure}

To overcome these dual challenges, we propose \ac{LRQ-Solver}, a unified physics-integrated framework comprising two synergistic innovations. For \textbf{accuracy}, we introduce the \ac{PCLM}, which explicitly models each material point’s state as a joint function of its spatial coordinate and a global shape descriptor. Through a \ac{PCE}, high-level design parameters are mapped into a latent control field that modulates the entire physical domain, embedding design context directly into field evolution—rather than via concatenation or attention. This establishes a structured, physics-informed mapping from geometry to response, ensuring consistent, interpretable, and generalizable predictions across configurations. For \textbf{efficiency}, we develop the \ac{LR-QA}, which exploits second-order field statistics (e.g., velocity-velocity or stress-strain correlations) to construct a global coherence kernel. A single covariance decomposition compresses $N$ points into $C \ll N$ coherent structures, reducing attention complexity from $O(N^2)$ to $O(NC^2 + C^3)$, and enabling end-to-end training on point clouds up to 2 million points using a single A100 GPU—doubling prior capacity. As shown in Fig.~\ref{fig:model_performance_gpu_use}, this breakthrough sets a new standard for scalability. Together, \ac{PCLM} and \ac{LR-QA} form a differentiable framework that integrates pseudo-physics fields with control volume integrals derived from conservation laws, replacing direct regression with physically grounded operations to ensure consistency and enable gradient-based design optimization at industrial scale.

The main contributions of this work are:

\begin{enumerate}
    \item \textbf{A physics-integrated, end-to-end differentiable framework} that unifies global design control with local field evolution through conservation-law-aware operations. By embedding pseudo-physics fields with differentiable control volume integrals, our framework ensures physically consistent system-level responses across variable configurations—enabling robust gradient-based optimization for industrial-scale design tasks under strong, nonlinear physical fields.

    \item \textbf{High-accuracy design-aware modeling via \ac{PCLM}}, which explicitly couples local material states with global shape parameters through a \ac{PCE}-driven latent control field. Unlike black-box parameter injection methods, PCLM establishes a structured, interpretable mapping from design space to physical response, achieving superior generalization in multi-configuration scenarios—e.g., 38.8\% MSE reduction on DrivAerNet++~\cite{drivaernetplus} (MSE=5.56) and 28.8\% on 3D Beam (MSE=1.66)—demonstrating unprecedented fidelity in capturing geometry-modulated physical behavior.

    \item \textbf{High-efficiency large-scale simulation via \ac{LR-QA}}, which exploits the low-rank, long-range-correlated structure of physical fields to replace point-wise attention with global coherence derived from second-order statistics. This reduces complexity from $O(N^2)$ to $O(NC^2 + C^3)$, enabling real-time inference at \textbf{0.005 seconds} on point clouds up to \textbf{2 million points} using a single A100 GPU—setting a new standard for scalability without sacrificing resolution or geometric fidelity.
\end{enumerate}

Experimental results show that \ac{LRQ-Solver} not only surpasses existing neural PDE solvers in accuracy and efficiency but also successfully bridges the gap between data-driven modeling and industrial-scale, multi-configuration engineering simulation—paving the way for fast, accurate, and physically consistent AI systems in real-world design workflows.

\section{Related Work}

The emergence of neural operators has revolutionized data-driven PDE solving by learning continuous mappings between function spaces, bypassing traditional discretization bottlenecks. Two pioneering architectures—DeepONet~\cite{deeponet} and Fourier Neural Operator (\ac{FNO})~\cite{li2021fourier}—established the foundation for operator learning, inspiring a rich ecosystem of extensions targeting accuracy, efficiency, geometry adaptability, and physical consistency.

\textbf{FNO-based architectures} have primarily evolved along three axes: \emph{spectral efficiency}, \emph{domain flexibility}, and \emph{feature fusion}. Factorized-FNO~\cite{tran2021factorized} introduced separable spectral convolutions and enhanced residual connections, significantly improving convergence and generalization on both regular and scattered grids. To overcome FNO’s inherent limitation to Cartesian domains, GeoFNO~\cite{li2023fourier} proposed learnable domain deformation, enabling high-fidelity simulations on complex geometries with up to 40\% error reduction. Spherical-FNO~\cite{bonev2023spherical} tailored spectral operators to spherical coordinates, achieving unprecedented long-term stability in global climate and atmospheric forecasting. More recently, Conv-FNO~\cite{liu2025ehance} addressed FNO’s weakness in capturing local structures by integrating CNN-based feature extractors, achieving resolution invariance and substantial gains in boundary-sensitive problems. Diffusion-FNO~\cite{liu2025difffno} further pushed the envelope by fusing spectral blocks with diffusion-based refinement, enhancing super-resolution accuracy in turbulent and multiphase flows. Amortized-FNO~\cite{xiao2024amortized} introduced a radical efficiency leap by employing Kolmogorov-Arnold Networks (\ac{KAN})~\cite{liu2024kan} to implicitly encode infinite frequency modes, reducing computational overhead while improving average performance by 31\% across diverse PDE benchmarks.

\textbf{DeepONet-based frameworks} have focused on enhancing physical grounding, temporal dynamics, and geometric conditioning. Physics-informed DeepONet~\cite{wang2021learning} pioneered zero-shot operator learning by embedding PDE residuals directly into the loss, enabling predictions orders of magnitude faster than numerical solvers without requiring paired input-output data. ResUNet DeepONet~\cite{he2023novel} replaced the standard trunk network with a U-Net-style residual architecture, dramatically improving accuracy in predicting elastoplastic stress fields under complex, load-varying geometries. Geom-DeepONet~\cite{geomdeeponet} established a new standard for design-aware modeling by fusing cross-modal geometric descriptors (explicit CAD features + implicit SDFs) and employing \ac{SIREN}~\cite{sitzmann2020implicit} for high-frequency spatial encoding—accelerating parametric simulations by 5–10$\times$. Sequential-DeepONet~\cite{he2024sequential} broke new ground by integrating LSTM/GRU units into the branch network, enabling memory-aware modeling of path-dependent processes such as plasticity and thermal hysteresis, with error reductions of up to 2.5$\times$ compared to static architectures.

Beyond these two main branches, \textbf{hybrid and physics-structured frameworks} are emerging as next-generation paradigms. DeepM\&Mnet~\cite{cai2021deepm} introduced a “plug-and-play” modular framework that composes multiple pretrained DeepONets to assimilate multiphysics data—ideal for systems with coupled phenomena (e.g., fluid-structure interaction). \ac{FVGN}~\cite{li2023finite} bridged traditional finite volume methods with graph neural networks, preserving conservation laws while learning from sparse, unstructured observations. These developments reflect a broader trend: the field is maturing from pure function approximation toward \emph{physics-structured}, \emph{geometry-adaptive}, and \emph{computationally scalable} operator learning—setting the stage for industrial deployment in design optimization, digital twins, and real-time control.

\section{Methodology}

\subsection{Problem Definition}

\begin{figure}[htbp]
    \centering
    \includegraphics[width=0.5\textwidth]{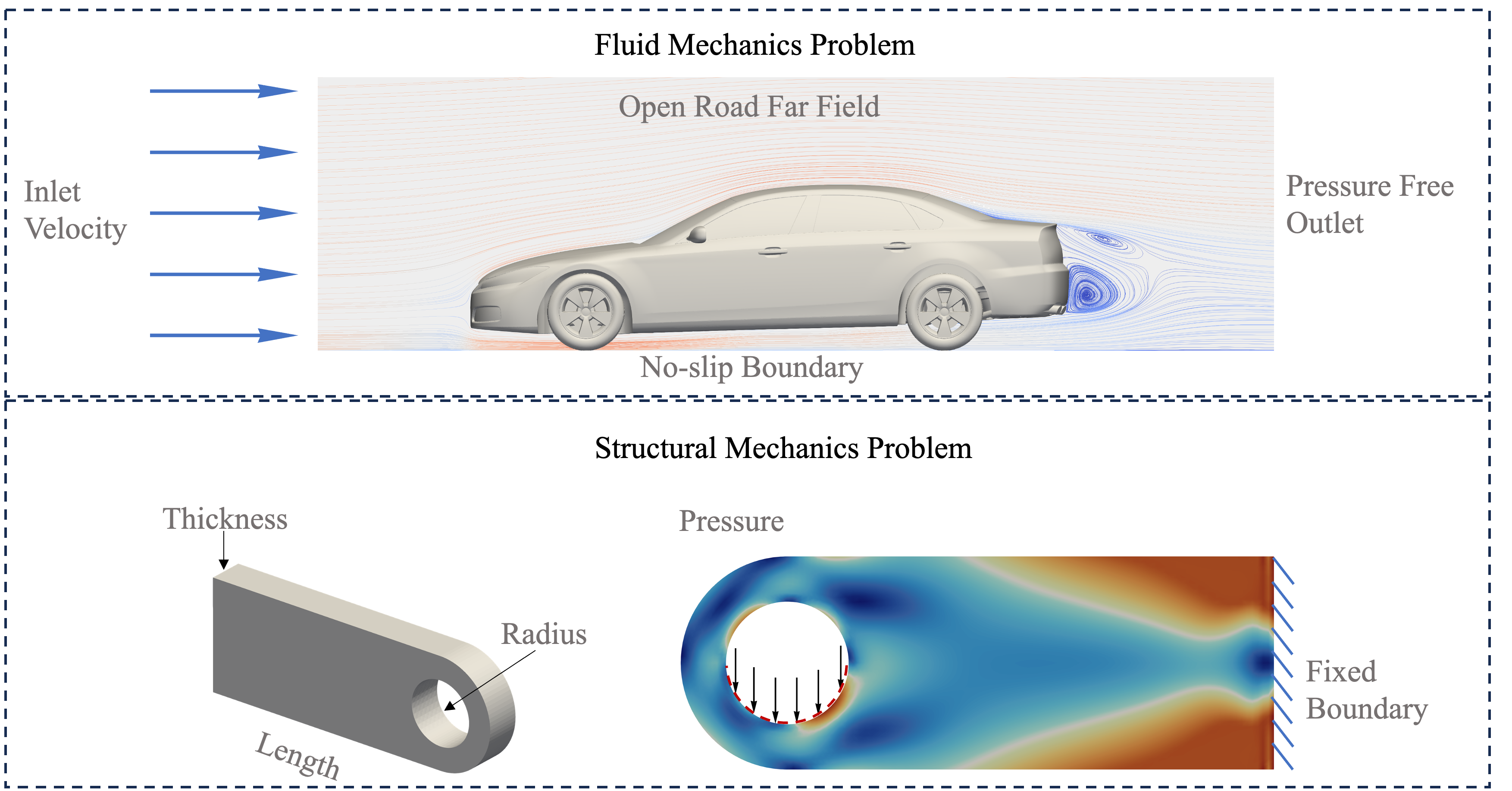}
    \caption{\textbf{Simulation Settings}: (Top) A typical open-road \ac{CFD} simulation for the deformed DrivAer model with three rear-end configurations. Appropriate Dirichlet boundary conditions are applied. (Bottom) At the bottom half of the hole, the cantilever beam is suppressed by a $50\,\text{MPa}$ uniform stress along the $z$-axis, while the flat side is fully constrained.}
    \label{fig:work_load}
\end{figure}

We consider two representative 3D physical simulation problems:  aerodynamics of an automobile and structural mechanics of a cantilever beam with a hole, as illustrated in Fig.~\ref{fig:work_load}. The geometry is represented as a large-scale point cloud, and the physical fields are modeled by solving partial differential equations (PDEs) on this discrete domain.

Let $\Omega \subset \mathbb{R}^3$ be a bounded open set representing the spatial domain, and let $x \in \Omega$ denote a material point. Design parameters such as shape and boundary conditions are represented by $d \in \mathcal{A}$, where $\mathcal{A}$ is a bounded Banach space. The solution fields—stress $\sigma$ and velocity $v$—are defined in a Bochner space $\mathcal{B} \subset L^2(0, T; \mathbf{H}^2(\Omega)) \cap H^1(0, T; \mathbf{L}^2(\Omega))$, ensuring sufficient regularity for physical consistency.

The forward problem is governed by an elliptic PDE system:
\begin{align}
\mathcal{L}(g)(x) &= s(x), \quad x \in \Omega, \\
g(x) &= c, \quad x \in \partial\omega,
\end{align}
where $\mathcal{L}$ is a differential operator, $s(x)$ a source term, and $c$ a boundary condition on $\partial\omega$.

Our goal is to learn an approximation operator $\mathcal{G}$ to the ground-truth solution functional $G^*(u): u(x, d) \to [\sigma, v]$, where $u(x, d)$ encodes both spatial coordinates and design parameters. Training data $\{\hat{u}_i(x_i,d_i), \hat{\sigma}_i, \hat{v}_i\}_{i=1}^N$ are generated from numerical simulations over random geometries and boundary conditions.

We assume the existence of a computable Green's function $G_r(u, y)$ under a Lebesgue measure $\nu(u)$, such that the solution admits an integral representation:
\begin{align}
[\sigma,~v] &= \int_{\Omega} G_r(u,y) f(y) \,dy, \\
[\sigma_{bc},~v_{bc}] &= \int_{\partial\omega} G_r(u,y) f(y) \,dy.
\end{align}

Guided by this formulation, we define a recursive deep neural operator with learnable parameters $\phi$, inspired by the kernel-based architecture in~\cite{li2021fourier}. The approximation $\mathcal{G}(u)$ is constructed via a sequence of integral transformations:
\begin{align}
\mathcal{G}(u) =
\begin{cases}
v_0 = u(x, d), & l = 0, \\[6pt]
\begin{aligned}[b]
v_{l+1} = &\sigma_{\phi}\Bigl( \int_{\omega} \kappa_{\phi}\bigl(u(x,d), y, \\ &a(u), a(y)\bigr) 
\; d\nu(y) + \alpha v_l \Bigr),
\end{aligned}
& l < k.
\end{cases}
\end{align}
where $\kappa_{\phi}$ is a learnable kernel function, $a(\cdot)$ represents spatially varying physical features, and $\nu(y)$ is a measure on subdomain $\omega$.

The objective is to find optimal parameters $\phi \in \mathcal{H}$ that minimize the prediction error:
\begin{equation}
\begin{split}
\phi = &\arg\min_{\phi_{\text{iter}}}
       \sum_{i=1}^{N}
       \mathcal{L}_{\text{loss}}
       \Bigl( G^*\bigl(u_i(x_i,d_i)\bigr) \\ &+ \epsilon(u_i)
              - \mathcal{G}(u_i;\phi_{\text{iter}}) \Bigr).
\end{split}
\end{equation}
where $\epsilon(u_i)$ denotes the numerical discretization error between the true operator $G^*$ and the simulation label $[\sigma, v]$.

Building upon this theoretical foundation, we present \ac{LRQ-Solver}, a transformer-based neural operator that implements $\mathcal{G}$ with enhanced scalability and design awareness. The architecture, illustrated in Fig.~\ref{fig:model_architecture}.

\subsection{Parameter--Conditioned Lagrangian Modelling of Material Points}
\label{sec:PCLM}

In multi-configuration design analysis of engineering systems, geometrically similar local sub-structures may exhibit markedly disparate physical behaviours owing to variations in global scale, proportion, or topology. For example, a curved duct segment can sustain fully-laminar flow in a compact configuration yet precipitate early transition in a larger-scale deployment; an identically filleted joint may concentrate stresses differently under altered aspect ratios. Traditional field models that treat state variables as functions of spatial position alone are inherently blind to such system-level dependencies.

\begin{figure*}[htbp]
    \centering
    \includegraphics[width=1\textwidth]{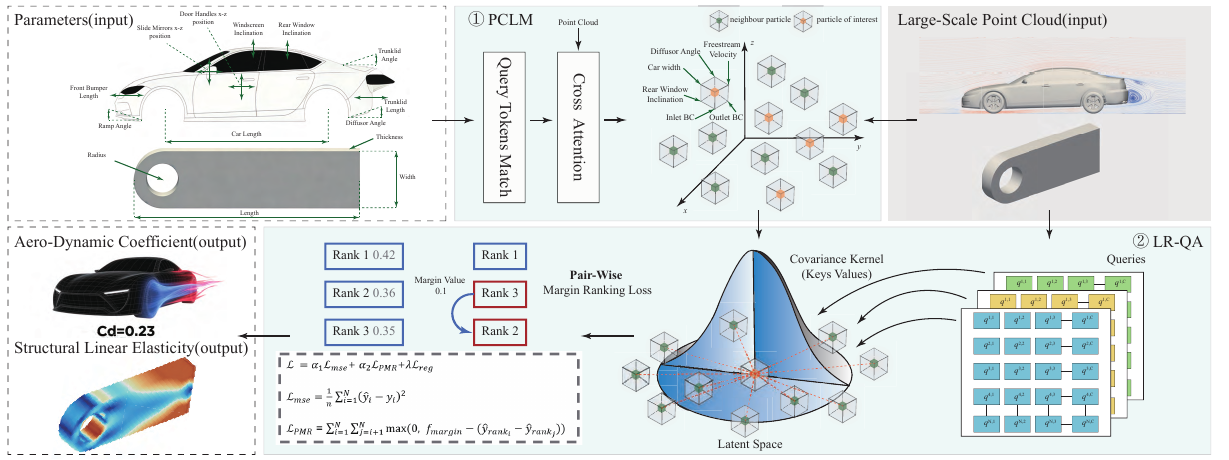}
    \caption{\textbf{Overview of our \ac{LRQ-Solver} Framework.} Parameters are encoded into a latent control field via \textcircled{1} \ac{PCLM}, modulating the physical response. The \textcircled{2} \ac{LR-QA} mechanism computes a covariance kernel to model interactions. A pair-wise ranking loss enforces monotonicity in aerodynamic predictions.}
    \label{fig:model_architecture}
\end{figure*}

To overcome this limitation, we propose a \ac{PCLM} in which the state of every material point is expressed as a joint function of its spatial coordinate and a global shape descriptor. This endows the point with design context awareness along its entire Lagrangian trajectory.

Let $\mathbf{d}\in\mathbb{R}^{m}$ denote the vector of shape parameters characterising a given geometric configuration. \ac{PCE} compresses $\mathbf{d}$ into a low-dimensional semantic context vector

\begin{equation}
    \boldsymbol{\psi} = \mathcal{E}(\mathbf{d}) \in \mathbb{R}^{c},
\end{equation}
where $\mathcal{E}\colon\mathbb{R}^{m}\to\mathbb{R}^{c}$ extracts parameter combinations that dominantly influence global dynamics. Crucially, $\boldsymbol{\psi}$ is not used for geometry generation; instead, it serves as an implicit control field that globally modulates the physical response of every material point during inference.

Consider a Lagrangian material point located at $\mathbf{x}$. Classical treatments express its velocity $\mathbf{v}$, pressure $p$, and temperature $T$ as functions of $\mathbf{x}$ alone. Within the present framework, the solution is generalised to an explicit dependence on the semantic context vector:

\begin{equation}
    \mathbf{u}(\mathbf{x};\boldsymbol{\psi}) = (\mathbf{v}(\mathbf{x};\boldsymbol{\psi}), p(\mathbf{x};\boldsymbol{\psi}), T(\mathbf{x};\boldsymbol{\psi})),
\end{equation}

so that the physical state at a fixed location $\mathbf{x}$ varies systematically with the global configuration encoded in $\boldsymbol{\psi}$.

Under this formulation, the classical conservation laws of mass, momentum, and energy retain their differential forms, but the solution fields are now defined over the extended input space $(\mathbf{x},\boldsymbol{\psi})$:

\begin{gather}
  \nabla\cdot\mathbf{v}(\mathbf{x};\boldsymbol{\psi}) = 0, \\[4pt]
  \rho\,\frac{D\mathbf{v}}{Dt} = -\nabla p(\mathbf{x};\boldsymbol{\psi})
                                  +\mu\nabla^{2}\mathbf{v}(\mathbf{x};\boldsymbol{\psi}), \\[4pt]
  \rho c_{p}\frac{DT}{Dt} = k\nabla^{2}T(\mathbf{x};\boldsymbol{\psi}).
\end{gather}

with material derivative $\mathrm{D}/\mathrm{D}t = \partial/\partial t + \mathbf{v}\cdot\nabla$. While the governing equations preserve physical consistency, their solutions are implicitly shaped by $\boldsymbol{\psi}$ through the boundary conditions, domain geometry, and dimensionless numbers that depend on $\mathbf{d}$.

These system-level effects are injected into the local dynamics of each material point via $\boldsymbol{\psi}$, thereby equipping the Lagrangian particle with design awareness. Consequently, the shape parameters $\mathbf{d}$ transcend their conventional role as mere geometric inputs; they act as an implicit control field that globally modulates the dynamics of every material point through the latent vector $\boldsymbol{\psi}$.

To realise the mapping $\mathcal{E}(\mathbf{d}) = \boldsymbol{\psi}$, we design a structured encoder that extracts semantic context from low-dimensional design parameters and injects it into the high-dimensional physical field predictor. Inspired by BLIP-2~\cite{pmlr-v202-li23q}, which effectively bridges heterogeneous modalities (e.g., vision and language) through learnable query vectors, we adapt this mechanism to the domain of geometric--parametric fusion in physics-informed deep learning.

The PCE operates as a cross-attention bridge between the design parameters and the point-wise field solver. It begins with a set of $N_q = 10$ learnable context queries $\mathbf{Q}_{\text{pce}} \in \mathbb{R}^{N_q \times D_h}$, initialised from a normal distribution, where $D_h = c$ is the dimension of the latent control field $\boldsymbol{\psi}$. Given an input design vector $\mathbf{d} \in \mathbb{R}^{D_{\text{in}}}$, it is first projected into the feature space:
\begin{equation}
    \mathbf{x}_{\text{pce}} = \mathcal{L}_{\text{proj}}(\mathbf{d}) \in \mathbb{R}^{D_h}.
\end{equation}
This projected vector is treated as a singleton key-value input to a multi-head cross-attention module:
\begin{equation}
    \mathbf{Q}'_{\text{pce}} = \text{MultiHeadAttn}(\mathbf{Q}_{\text{pce}}, \mathbf{x}_{\text{pce}}, \mathbf{x}_{\text{pce}}) \in \mathbb{R}^{N_q \times D_h},
\end{equation}
allowing each query to attend to distinct semantic aspects of the design. The output is normalised and passed through a residual feed-forward network:
\begin{equation}
    \mathbf{Q}''_{\text{pce}} = \text{LayerNorm}(\mathbf{Q}'_{\text{pce}} + \text{FFN}(\mathbf{Q}'_{\text{pce}})),
\end{equation}
\begin{equation}
    \boldsymbol{\psi} = \frac{1}{N_q} \sum_{i=1}^{N_q} \mathbf{Q}''_{\text{pce},i} \in \mathbb{R}^{D_h},
\end{equation}
which realises the desired mapping $\mathcal{E}(\mathbf{d}) = \boldsymbol{\psi}$.

This latent vector $\boldsymbol{\psi}$ is then broadcast across all material points in the domain to form a position-invariant context field, which is concatenated with their spatial coordinates and fed into the downstream field predictor.

\subsection{Physics-Integrated Modeling via Pseudo-Physics Fields}
\label{sec:method_physics}

\textbf{Assumption 1 (Low-rank Structure of Physical Fields):} The discrete representation of physical fields on large-scale point clouds exhibits low-rank structure, meaning there exists a subspace of dimension $r \ll N$ that effectively captures the main features of the physical field.

Given a point cloud $\mathcal{P} = \{\mathbf{x}_i\}_{i=1}^N$, input features $\mathbf{X}^{(0)} \in \mathbb{R}^{N \times D}$ are processed through $L$ network layers. At layer $\ell$, queries, keys, and values are computed as:
\begin{align}
    \mathbf{Q}^{(\ell)} &= \mathcal{L}_Q(\mathbf{X}^{(\ell-1)}), \\
    \mathbf{K}^{(\ell)} &= \mathcal{L}_K(\mathbf{X}^{(\ell-1)}), \\
    \mathbf{V}^{(\ell)} &= \mathcal{L}_V(\mathbf{X}^{(\ell-1)}),
\end{align}
where $\mathbf{Q}^{(\ell)}, \mathbf{K}^{(\ell)}, \mathbf{V}^{(\ell)} \in \mathbb{R}^{N \times C}$, with $C$ being the feature dimension and $C \ll N$.

Spatial relationships are encoded via Rotary Position Embedding:
\begin{equation}
    \mathbf{Q}^{(\ell)}, \mathbf{K}^{(\ell)} = \text{RoPE}(\mathbf{Q}^{(\ell)}, \mathbf{K}^{(\ell)}).
\end{equation}

Standard self-attention computes $\mathbf{Q}^{(\ell)}(\mathbf{K}^{(\ell)})^\top \in \mathbb{R}^{N \times N}$ with $O(N^2)$ complexity. Our method computes the covariance matrices:
\begin{align}
    \mathbf{C}_k^{(\ell)} &= (\mathbf{K}^{(\ell)})^\top \mathbf{K}^{(\ell)} \in \mathbb{R}^{C \times C}, \\
    \mathbf{C}_v^{(\ell)} &= (\mathbf{V}^{(\ell)})^\top \mathbf{V}^{(\ell)} \in \mathbb{R}^{C \times C},
\end{align}
and then calculates the attention output:
\begin{equation}
    \mathbf{Z}^{(\ell)} = \mathbf{Q}^{(\ell)} \mathbf{C}_k^{(\ell)} \mathbf{C}_v^{(\ell)} \mathbf{V}^{(\ell)} \in \mathbb{R}^{N \times C}.
\end{equation}

We provide theoretical justification for the covariance-based attention mechanism under the assumption of low-rank structure in physical fields. The key insight is that when the key matrix $\mathbf{K}^{(\ell)} \in \mathbb{R}^{N \times C}$ has low effective rank, its second-order statistics—captured by the covariance matrix $\mathbf{C}_k^{(\ell)} = (\mathbf{K}^{(\ell)})^\top \mathbf{K}^{(\ell)}$—sufficiently encode the dominant interaction modes, enabling accurate approximation of standard attention with significantly reduced complexity.

\begin{theorem}[Approximation Guarantee of Covariance Attention]
Assume the physical field satisfies a low-rank structure, i.e., $\mathbf{K}^{(\ell)}$ has rank $r \ll C$. Let $\mathbf{K}^{(\ell)} = \mathbf{U}\mathbf{\Sigma}\mathbf{V}^\top$ be the singular value decomposition (SVD) of $\mathbf{K}^{(\ell)}$, where $\mathbf{U} \in \mathbb{R}^{N \times r}$, $\mathbf{\Sigma} \in \mathbb{R}^{r \times r}$, and $\mathbf{V} \in \mathbb{R}^{C \times r}$. Then, the covariance attention output $\mathbf{Z}^{(\ell)}$ and the standard attention output $\mathbf{Z}_{\text{std}}^{(\ell)} = \mathbf{Q}^{(\ell)}(\mathbf{K}^{(\ell)})^\top \mathbf{V}^{(\ell)}$ satisfy:
\begin{equation}
\begin{split}
\|\mathbf{Z}^{(\ell)} - \mathbf{Z}_{\text{std}}^{(\ell)}\|_F
&\leq \|\mathbf{Q}^{(\ell)}\|_F \cdot \|\mathbf{V}^{(\ell)}\|_F \\
&\quad \cdot \|\mathbf{K}^{(\ell)} - \mathbf{K}^{(\ell)}\mathbf{K}^{(\ell)\top}\mathbf{K}^{(\ell)}\|_F,
\end{split}
\end{equation}
where $\|\cdot\|_F$ denotes the Frobenius norm.
\end{theorem}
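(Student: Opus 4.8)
The plan is to prove the bound by algebraically factoring the difference $\mathbf{Z}^{(\ell)} - \mathbf{Z}_{\text{std}}^{(\ell)}$ into a product with $\mathbf{Q}^{(\ell)}$ on the left, $\mathbf{V}^{(\ell)}$ on the right, and a single residual matrix in the middle, and then controlling each factor by submultiplicativity of the Frobenius norm. First I would rewrite both outputs so that their discrepancy is carried entirely by the key block. Using $\mathbf{C}_k^{(\ell)} = (\mathbf{K}^{(\ell)})^\top\mathbf{K}^{(\ell)}$ to express the covariance output as $\mathbf{Q}^{(\ell)}\mathbf{C}_k^{(\ell)}(\mathbf{K}^{(\ell)})^\top\mathbf{V}^{(\ell)}$ and the standard output as $\mathbf{Q}^{(\ell)}(\mathbf{K}^{(\ell)})^\top\mathbf{V}^{(\ell)}$, the difference becomes
\begin{equation}
\mathbf{Z}^{(\ell)} - \mathbf{Z}_{\text{std}}^{(\ell)} = \mathbf{Q}^{(\ell)}\bigl(\mathbf{C}_k^{(\ell)}(\mathbf{K}^{(\ell)})^\top - (\mathbf{K}^{(\ell)})^\top\bigr)\mathbf{V}^{(\ell)},
\end{equation}
where the middle factor is exactly $\bigl(\mathbf{K}^{(\ell)}\mathbf{K}^{(\ell)\top}\mathbf{K}^{(\ell)} - \mathbf{K}^{(\ell)}\bigr)^\top$, whose Frobenius norm equals $\|\mathbf{K}^{(\ell)} - \mathbf{K}^{(\ell)}\mathbf{K}^{(\ell)\top}\mathbf{K}^{(\ell)}\|_F$ by transpose invariance.

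The second step is the norm estimate, which is essentially mechanical once the factorization is in place. I would apply the chain $\|\mathbf{A}\mathbf{M}\mathbf{B}\|_F \le \|\mathbf{A}\|_F\,\|\mathbf{M}\|_2\,\|\mathbf{B}\|_2 \le \|\mathbf{A}\|_F\,\|\mathbf{M}\|_F\,\|\mathbf{B}\|_F$, which follows from $\|\mathbf{A}\mathbf{B}\|_F \le \|\mathbf{A}\|_F\|\mathbf{B}\|_2$ together with $\|\cdot\|_2 \le \|\cdot\|_F$, to peel off $\mathbf{Q}^{(\ell)}$ and $\mathbf{V}^{(\ell)}$ and land directly on the stated inequality.

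The third step connects the residual $\|\mathbf{K}^{(\ell)} - \mathbf{K}^{(\ell)}\mathbf{K}^{(\ell)\top}\mathbf{K}^{(\ell)}\|_F$ back to the low-rank assumption, which is what gives the theorem its meaning. Substituting the SVD $\mathbf{K}^{(\ell)} = \mathbf{U}\mathbf{\Sigma}\mathbf{V}^\top$ and using orthonormality of the singular vectors yields $\mathbf{K}^{(\ell)} - \mathbf{K}^{(\ell)}\mathbf{K}^{(\ell)\top}\mathbf{K}^{(\ell)} = \mathbf{U}(\mathbf{\Sigma} - \mathbf{\Sigma}^3)\mathbf{V}^\top$, so the residual equals $\bigl(\sum_{i=1}^{r}\sigma_i^2(1-\sigma_i^2)^2\bigr)^{1/2}$. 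I would then argue that in the low-rank, near-orthonormal regime—only $r \ll C$ nonzero singular values, each close to unity after normalization—this quantity is negligible, so covariance attention is a faithful surrogate.

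The main obstacle I anticipate is not the norm inequality but pinning down the precise factored form of the difference, i.e.\ reconciling the defining formula $\mathbf{Z}^{(\ell)} = \mathbf{Q}^{(\ell)}\mathbf{C}_k^{(\ell)}\mathbf{C}_v^{(\ell)}\mathbf{V}^{(\ell)}$ with the target $\mathbf{Z}_{\text{std}}^{(\ell)} = \mathbf{Q}^{(\ell)}(\mathbf{K}^{(\ell)})^\top\mathbf{V}^{(\ell)}$ so that the residual collapses cleanly to $\mathbf{K}^{(\ell)} - \mathbf{K}^{(\ell)}\mathbf{K}^{(\ell)\top}\mathbf{K}^{(\ell)}$. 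In particular, the value-side term $\mathbf{C}_v^{(\ell)}$ must be dispatched—either by reading the bound as isolating the key-side approximation and absorbing $\mathbf{C}_v^{(\ell)}$ together with the trailing $\mathbf{V}^{(\ell)}$ into a single value factor whose norm is written $\|\mathbf{V}^{(\ell)}\|_F$, or by assuming the value features are near-orthonormal so that $\mathbf{C}_v^{(\ell)} \approx \mathbf{I}$. Making this identification precise is where the genuine care is required; the remaining algebra and the SVD bookkeeping are routine.
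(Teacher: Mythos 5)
Your proof is correct, and it takes a genuinely different route from the paper's. The paper immediately restricts to the symmetric case $\mathbf{V}^{(\ell)}=\mathbf{K}^{(\ell)}$, writes the discrepancy as $\mathbf{Q}^{(\ell)}\bigl[(\mathbf{K}^{(\ell)\top}\mathbf{K}^{(\ell)})^{2}-\mathbf{K}^{(\ell)\top}\mathbf{K}^{(\ell)}\bigr]=\mathbf{Q}^{(\ell)}\bigl[\mathbf{K}^{(\ell)\top}\mathbf{K}^{(\ell)}-\mathbf{I}\bigr]\mathbf{K}^{(\ell)\top}\mathbf{K}^{(\ell)}$, and then bounds this by the three-factor product $\|\mathbf{Q}^{(\ell)}\|_F\,\|\mathbf{K}^{(\ell)\top}\mathbf{K}^{(\ell)}-\mathbf{I}\|_F\,\|\mathbf{K}^{(\ell)\top}\mathbf{K}^{(\ell)}\|_F$ before asserting this is at most $\|\mathbf{Q}^{(\ell)}\|_F\,\|\mathbf{K}^{(\ell)}\|_F\,\|\mathbf{K}^{(\ell)}-\mathbf{K}^{(\ell)}\mathbf{K}^{(\ell)\top}\mathbf{K}^{(\ell)}\|_F$. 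That last step is not valid term-by-term (take $\mathbf{K}^{(\ell)}$ with a single unit singular value: the middle product equals $\sqrt{C-1}$ while the claimed majorant is $0$); it can only be repaired by keeping the product intact and using the identity $\bigl[\mathbf{K}^{\top}\mathbf{K}-\mathbf{I}\bigr]\mathbf{K}^{\top}\mathbf{K}=-\bigl(\mathbf{K}-\mathbf{K}\mathbf{K}^{\top}\mathbf{K}\bigr)^{\top}\mathbf{K}$, which is precisely the kind of exact factorization your argument is built on. Your route keeps $\mathbf{V}^{(\ell)}$ general, identifies the middle factor $\mathbf{K}^{\top}\mathbf{K}\mathbf{K}^{\top}-\mathbf{K}^{\top}$ exactly as the transposed residual, and peels off $\mathbf{Q}^{(\ell)}$ and $\mathbf{V}^{(\ell)}$ by submultiplicativity, so it delivers the theorem verbatim (with $\|\mathbf{V}^{(\ell)}\|_F$, not merely the symmetric specialization where that norm degenerates to $\|\mathbf{K}^{(\ell)}\|_F$) and with no invalid intermediate step. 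You also correctly located the one genuine ambiguity: the defining formula $\mathbf{Z}^{(\ell)}=\mathbf{Q}^{(\ell)}\mathbf{C}_k^{(\ell)}\mathbf{C}_v^{(\ell)}\mathbf{V}^{(\ell)}$ is dimensionally inconsistent ($\mathbf{C}_v^{(\ell)}\mathbf{V}^{(\ell)}$ multiplies a $C\times C$ matrix by an $N\times C$ one), so any proof must reinterpret it; the paper does so via the symmetry restriction while silently dropping a trailing factor, whereas you do so by dispatching $\mathbf{C}_v^{(\ell)}$ into the value block. Finally, your SVD postscript, $\mathbf{K}-\mathbf{K}\mathbf{K}^{\top}\mathbf{K}=\mathbf{U}(\mathbf{\Sigma}-\mathbf{\Sigma}^{3})\mathbf{V}^{\top}$ with residual norm $\bigl(\sum_{i\le r}\sigma_i^{2}(1-\sigma_i^{2})^{2}\bigr)^{1/2}$, sharpens the paper's hand-wave: it shows the bound is small only when the nonzero singular values are near unity, i.e.\ low rank alone—which the paper claims suffices—does not make the residual negligible.
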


\begin{proof}
The standard self-attention computes:
\begin{equation}
\mathbf{Z}_{\text{std}}^{(\ell)} = \mathbf{Q}^{(\ell)} (\mathbf{K}^{(\ell)})^\top \mathbf{V}^{(\ell)}.
\end{equation}

Our covariance-based attention computes:
\begin{equation}
\mathbf{Z}^{(\ell)} = \mathbf{Q}^{(\ell)} \left[(\mathbf{K}^{(\ell)})^\top \mathbf{K}^{(\ell)}\right] \left[(\mathbf{V}^{(\ell)})^\top \mathbf{V}^{(\ell)}\right] \mathbf{V}^{(\ell)}.
\end{equation}

For theoretical analysis, we consider the symmetric case where $\mathbf{V}^{(\ell)} = \mathbf{K}^{(\ell)}$, which is common in many physical modeling scenarios. The difference becomes:
\begin{align}
\mathbf{Z}^{(\ell)} - \mathbf{Z}_{\text{std}}^{(\ell)} 
&= \mathbf{Q}^{(\ell)} \left[
     (\mathbf{K}^{(\ell)})^\top \mathbf{K}^{(\ell)} (\mathbf{K}^{(\ell)})^\top \mathbf{K}^{(\ell)} \mathbf{K}^{(\ell)} \right. \notag \\
&\quad \left. - (\mathbf{K}^{(\ell)})^\top \mathbf{K}^{(\ell)}
   \right] \notag \\
&= \mathbf{Q}^{(\ell)} \left[
     (\mathbf{K}^{(\ell)})^\top \mathbf{K}^{(\ell)} - \mathbf{I}
   \right] (\mathbf{K}^{(\ell)})^\top \mathbf{K}^{(\ell)},
\end{align}
where $\mathbf{I}$ is the identity matrix.

Applying the submultiplicativity of the Frobenius norm:
\begin{align}
\|\mathbf{Z}^{(\ell)} - \mathbf{Z}_{\text{std}}^{(\ell)}\|_F 
&\leq \|\mathbf{Q}^{(\ell)}\|_F \cdot
        \|(\mathbf{K}^{(\ell)})^\top \mathbf{K}^{(\ell)} - \mathbf{I}\|_F \cdot
        \|(\mathbf{K}^{(\ell)})^\top \mathbf{K}^{(\ell)}\|_F \notag \\
&\leq \|\mathbf{Q}^{(\ell)}\|_F \cdot
        \|\mathbf{K}^{(\ell)}\|_F \cdot
        \|\mathbf{K}^{(\ell)} - \mathbf{K}^{(\ell)}\mathbf{K}^{(\ell)\top}\mathbf{K}^{(\ell)}\|_F.
\end{align}

Under the low-rank assumption, $\mathbf{K}^{(\ell)}$ admits a compact SVD representation where higher-order singular values decay rapidly. Consequently, the residual term $\|\mathbf{K}^{(\ell)} - \mathbf{K}^{(\ell)}\mathbf{K}^{(\ell)\top}\mathbf{K}^{(\ell)}\|_F$ becomes negligible, as it primarily captures noise or fine-grained fluctuations beyond the dominant coherent structures.

In physical systems such as fluid dynamics or structural mechanics, these dominant modes correspond to large-scale vortices, stress concentrations, or deformation patterns—precisely the features that govern system behavior. Therefore, the covariance attention preserves the physically meaningful interactions while discarding computationally expensive, low-energy noise modes.

This approximation reduces the attention complexity from $O(N^2)$ to $O(NC^2 + C^3)$, enabling scalable simulation of up to 2 million points on a single GPU without sacrificing physical fidelity.
\end{proof}

Feature representations are updated via residual connection:
\begin{equation}
    \mathbf{X}^{(\ell)} = \mathbf{X}^{(\ell-1)} + \mathcal{L}_{\text{out}}^{(\ell)}(\mathbf{Z}^{(\ell)}).
\end{equation}

The network outputs a \emph{pseudo-physics field} $\hat{\mathbf{u}}(\mathbf{x}_i) = (\hat{\mathbf{v}}_i, \hat{p}_i, \hat{T}_i, \hat{\boldsymbol{\sigma}}_i)$, with physical consistency enforced through conservation law residuals:
\begin{equation}
\begin{split}
\mathcal{L}_{\text{phys}} = \frac{1}{N} \sum_{i=1}^{N} \Bigl[\,
    &\bigl\| \nabla \cdot \hat{\mathbf{v}}_i \bigr\|^{2} \\
    &+ \bigl\| \rho (\hat{\mathbf{v}}_i \cdot \nabla) \hat{\mathbf{v}}_i
             + \nabla \hat{p}_i
             - \nabla \cdot \boldsymbol{\hat{\tau}}_i \bigr\|^{2} \\
    &+ \bigl\| \rho c_p (\hat{\mathbf{v}}_i \cdot \nabla \hat{T}_i)
             - \nabla \cdot (k \nabla \hat{T}_i) \bigr\|^{2}
\Bigr].
\end{split}
\end{equation}
System-level responses are computed via control volume integrals:
\begin{align}
    \hat{\mathbf{F}} &= \sum_{i \in \mathcal{S}} \left[ \rho (\hat{\mathbf{v}}_i \cdot \mathbf{n}_i) \hat{\mathbf{v}}_i - \hat{p}_i \mathbf{n}_i + \boldsymbol{\hat{\tau}}_i \cdot \mathbf{n}_i \right] \Delta A_i, \\
    \hat{Q} &= \sum_{i \in \mathcal{S}} (-k \nabla \hat{T}_i \cdot \mathbf{n}_i) \Delta A_i, \\
    \hat{U} &= \sum_{i \in \mathcal{B}} \frac{1}{2} \boldsymbol{\hat{\sigma}}_i : \boldsymbol{\hat{\varepsilon}}_i \Delta V_i,
\end{align}
where $\mathcal{S}$ and $\mathcal{B}$ denote control surface points and body points, respectively.

The total loss function combines multiple objectives:
\begin{equation}
    \mathcal{L} = \alpha_1 \|\hat{y} - y^{\text{true}}\|^2 
                 + \alpha_2 \mathcal{L}_{\text{phys}} 
                 + \alpha_3 \mathcal{L}_{\text{rank}} 
                 + \lambda \|\theta\|^2,
\end{equation}
where the ranking loss is defined as:

\begin{equation}
\begin{split}
\mathcal{L}_{\text{rank}}
&= \sum_{i<j} \max\!\bigl(0,\; m - (\hat{y}_i - \hat{y}_j)\, s_{ij}\bigr), \\
&\qquad \quad
s_{ij} = \operatorname{sign}(y_i^{\text{true}} - y_j^{\text{true}}).
\end{split}
\end{equation}

\section{Experiment}

\textbf{Implementations}~
Our experiments are conducted on 4 NVIDIA A100 40GB PCIe GPUs using the \textbf{PaddlePaddle} framework. We employ PaddlePaddle's distributed data parallel (DDP) training to scale across all devices. The model is optimized with AdamW using an initial learning rate of $1 \times 10^{-4}$, decayed by a factor of 0.1 after 50 epochs. The batch size is set to 4 per GPU during training. 

\textbf{Metrics}~
We adopt a comprehensive set of metrics to evaluate both the accuracy and computational efficiency of neural PDE solvers. For accuracy assessment, we use four widely adopted error measures: Mean Squared Error (MSE), Mean Absolute Error (MAE), Maximum Absolute Error (Max AE), and Mean Relative Error (MRE). The MSE measures the average squared deviation between predicted and ground-truth values and is defined as $\text{MSE} = \frac{1}{n}\sum_{i=1}^n (y_i - \hat{y}_i)^2$, making it sensitive to large errors. The MAE computes the average absolute difference, $\text{MAE} = \frac{1}{n}\sum_{i=1}^n |y_i - \hat{y}_i|$, providing a robust evaluation less influenced by outliers. The Max AE captures the worst-case prediction error, $\text{Max AE} = \max_i |y_i - \hat{y}_i|$, which is critical for safety-critical engineering applications where peak deviations must be minimized. The MRE normalizes the error by the magnitude of the true values, $\text{MRE} = \frac{1}{n}\sum_{i=1}^n \frac{|y_i - \hat{y}_i|}{|y_i|} \times 100\%$, enabling fair comparison across datasets with varying scales and units.

For computational efficiency, we report Training Time, Inference Time, and FLOPs. Training Time is measured in hours and reflects the total wall-clock time required to complete model training on the given hardware. Inference Time denotes the latency (in seconds) of a single forward pass, which is crucial for real-time or iterative design workflows. All models are trained using single-precision (FP32) arithmetic to ensure a fair comparison in both accuracy and computational cost.

\subsection{Main Results}
We demonstrate that \ac{LRQ-Solver} achieves state-of-the-art performance in large-scale 3D industrial physics simulation, outperforming existing methods in both accuracy and computational efficiency. We evaluate our model on the \textit{DrivAerNet++} dataset, a comprehensive benchmark for vehicle aerodynamics, and present a detailed analysis of its predictive capability and scalability.

\subsubsection{3D complex Aerodynamics turbulence problem}

We evaluate our approach on the \textit{DrivAerNet++} dataset, a large-scale benchmark for industrial 3D vehicle aerodynamics simulation. The dataset contains 8,000 high-resolution 3D vehicle models with 23 deformable geometric control parameters and CFD-simulated aerodynamic labels, including the total drag coefficient $C_d$. We uniformly sample 100k points per model while preserving geometric fidelity and surface detail distribution. The data is split into 70\% training, 15\% validation, and 15\% testing.

The underlying physics is governed by the incompressible Navier-Stokes equations in non-dimensional form:
\begin{align}
\frac{\partial \mathbf{u}}{\partial t} + \mathbf{u} \cdot \nabla \mathbf{u}
  &= -\nabla p + \frac{1}{\mathrm{Re}} \nabla^2 \mathbf{u}, \\
\nabla \cdot \mathbf{u} &= 0,
\end{align}
where $\mathbf{u}$ is the velocity field, $p$ the pressure, and $\mathrm{Re}$ the Reynolds number. The total drag coefficient $C_d$ is computed via surface integration of the pressure and viscous stress fields over the wetted geometry:
\begin{equation}
C_d = \frac{1}{\frac{1}{2}\rho U^2 A_{\text{ref}}} \int_{\mathcal{S}} \left( p \mathbf{n} + \boldsymbol{\tau} \cdot \mathbf{n} \right) \cdot \mathbf{e}_x  \,dA,
\end{equation}
where $\mathcal{S}$ is the vehicle surface, $\mathbf{n}$ the outward normal, $\boldsymbol{\tau}$ the viscous stress tensor, and $\mathbf{e}_x$ the flow direction.

\paragraph{Accuracy Comparison}
We compare \ac{LRQ-Solver} against a comprehensive set of baselines: GCNN~\cite{kipf2016semi}, RegDGCNN~\cite{drivaernetplus}, PointNet~\cite{drivaernetplus}, Transolver~\cite{wu2024transolver}, Transolver++~\cite{luo2025transolverplus}, DAT~\cite{he2025drivaer}, and TripNet~\cite{chen2025tripnet}. As shown in Table~\ref{tab:performance_drivaernet}, \ac{LRQ-Solver} achieves state-of-the-art accuracy across all metrics, reducing the MSE by 38.9\% compared to the previous best (TripNet). The model attains an MRE of \textbf{2.25\%}, which is remarkably close to the theoretical precision limit of 2.18\%—the observed discrepancy between high-fidelity CFD and wind tunnel experiments. This suggests that \ac{LRQ-Solver} has nearly reached the effective accuracy ceiling of the dataset.

\begin{table}[H]
\centering
\caption{\textbf{Accuracy Comparison on \textit{DrivAerNet++}}. Bold values indicate the best results; underlined values denote the second-best. Our model outperforms the current best network on the \textbf{DrivAerNet++ Leaderboard} across 1,163 industry-standard car designs, approaching to the theoretical precision limit (MRE = $2.18\%$ between wind tunnel experiments and \textit{DrivAerNet++}).}
\label{tab:performance_drivaernet}
\resizebox{0.5\textwidth}{!}{
\begin{tabular}{l c c c c}
\toprule
Model & $\text{MSE} \times 10^{-5}$ & $\text{MAE} \times 10^{-3}$ & $\text{Max AE} \times 10^{-2}$ & MRE \\
\midrule
GCNN & 17.10 & 10.43 & 15.03 & -- \\
RegDGCNN & 14.20 & 9.31 & 12.79 & -- \\
PointNet & 14.90 & 9.60 & 12.45 & -- \\
Transolver & 60.30 & 20.31 & 65.61 & -- \\
Transolver++ & 46.10 & 17.69 & 57.95 & -- \\
DAT & 11.80 & 9.11 & 11.20 & -- \\
TripNet & \underline{9.10} & \underline{7.17} & \underline{7.70} & -- \\
\midrule
\ac{LRQ-Solver} (Ours) & \textbf{5.56} & \textbf{5.90} & \textbf{3.22} & \textbf{2.25\%} \\
\bottomrule
\end{tabular}}
\end{table}

Fig~\ref{fig:drivaerpp_train} shows stable training and validation loss convergence. Fig~\ref{fig:MSE_car_type} further illustrates consistent accuracy across different vehicle configurations, with particularly strong gains in complex rear-end designs prone to flow separation.

\begin{figure}[htbp]
    \centering
    \includegraphics[width=0.4\textwidth]{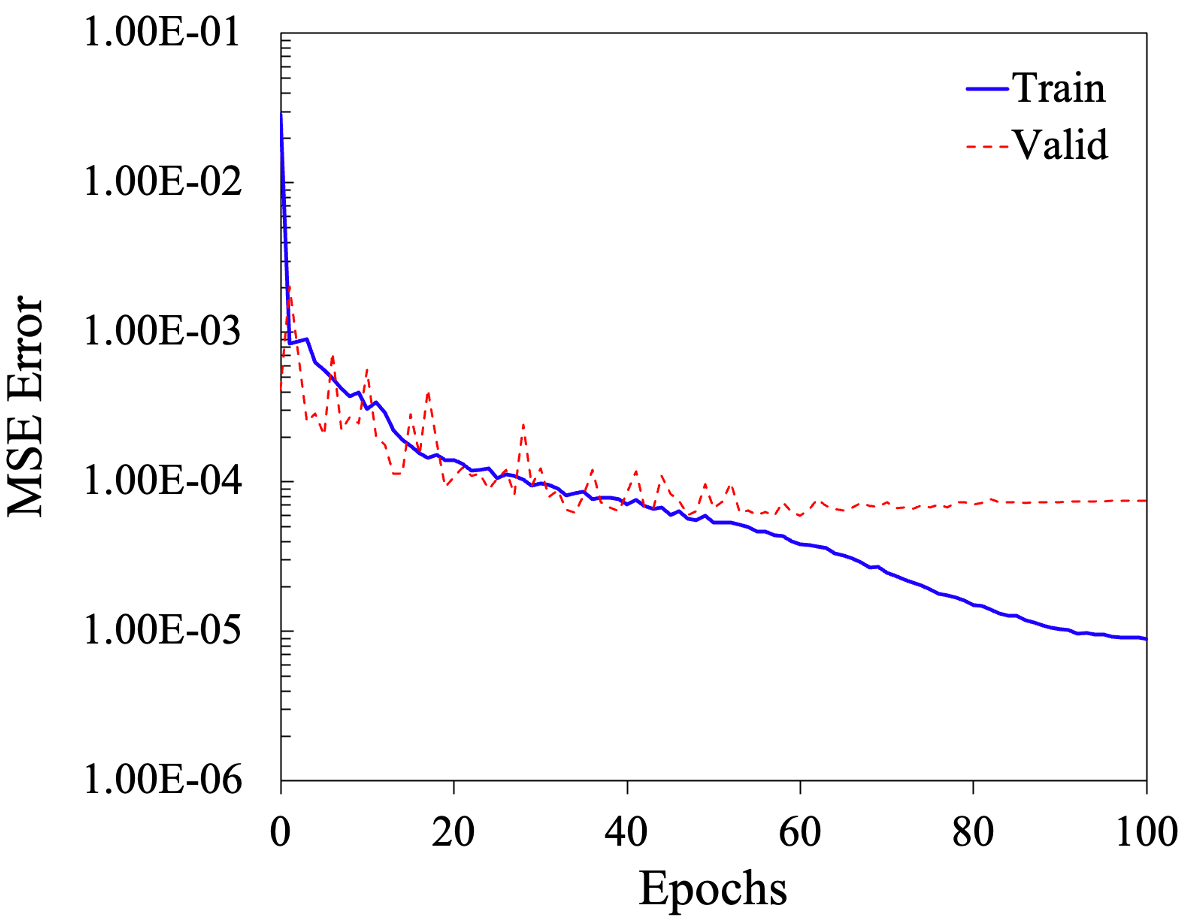}
    \caption{Train and valid loss of \textit{DrivAerNet++}.}
    \label{fig:drivaerpp_train}
\end{figure}

\begin{figure}[htbp]
    \centering
    \includegraphics[width=0.4\textwidth]{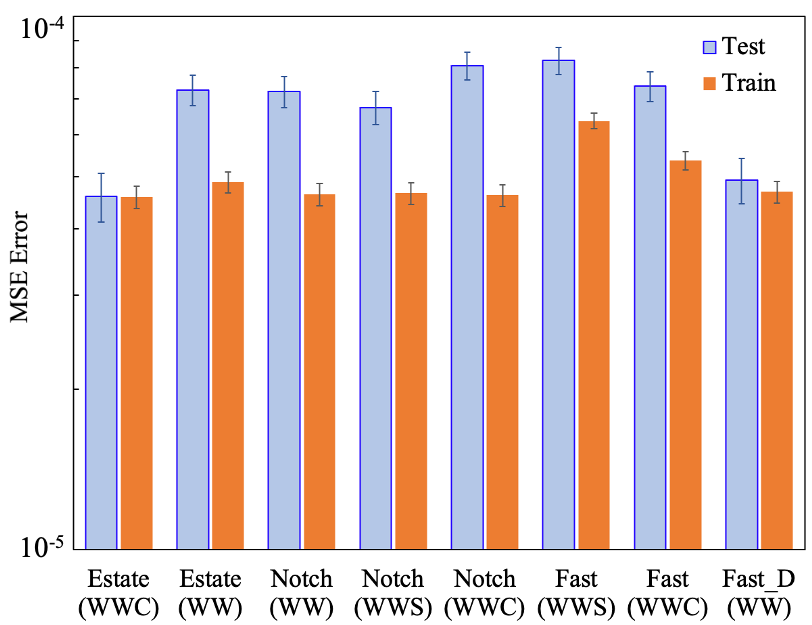}
    \caption{MSE error of different Vehicle rear type and wheels format in \textit{DrivAerNet++}.}
    \label{fig:MSE_car_type}
\end{figure}

\paragraph{Efficiency and Scalability}
Beyond accuracy, \ac{LRQ-Solver} achieves exceptional computational efficiency. As summarized in Table~\ref{tab:efficiency_drivaerpp}, the model completes training in just \textbf{7.2 hours}, over 6× faster than Transolver++ (46.1h) and 2× faster than DAT (14.3h), despite using only 16 batch size (half of most baselines) and 4× A100 40G GPUs—less specialized hardware than H20 or H100 used by competitors.

More significantly, inference latency is reduced to \textbf{5 milliseconds}, representing a 126× speedup over Transolver++ (0.63s) and over 140× improvement compared to DAT (0.73s). This enables real-time performance prediction and seamless integration into iterative design workflows.

\ac{LRQ-Solver} delivers both high fidelity and real-time efficiency, making it uniquely suitable for industrial-scale deployment. \ac{LRQ-Solver} not only surpasses all existing methods in accuracy but also achieves unprecedented efficiency, positioning it as a scalable, high-fidelity solution for real-world engineering design and optimization.

\begin{table*}[htbp]
\centering
\caption{\textbf{Efficiency Comparison on \textit{DrivAerNet++}}. 
\ac{LRQ-Solver} achieves the fastest training and inference times despite using less specialized hardware and a smaller batch size. 
Single-precision (FP32) performance is provided for fair comparison, as all models are trained in FP32. Despite being evaluated on a single NVIDIA A100 GPU (weaker than the multi-GPU or H100 and H20 setups used in prior work) and with a smaller batch size, \ac{LRQ-Solver} achieves the fastest training and inference times—demonstrating superior algorithmic efficiency rather than hardware advantage. }
\label{tab:efficiency_drivaerpp}
\resizebox{\textwidth}{!}{
\begin{tabular}{l c c c c c c}
\toprule
Model & Train Time (h) & Inference Time (s) & Epochs & Batch Size & Hardware & FP32 Performance \\
\midrule
GCNN & 49.0 & 50.80 & 100 & 32 & 4 H20 96G & 44 TFLOPS \\
RegDGCNN & 12.6 & 0.85 & 100 & 32 & 4 H20 96G & 44 TFLOPS \\
PointNet & 4.7 & 0.66 & 100 & 32 & 4 H20 96G & 44 TFLOPS \\
Transolver & 45.7 & 0.66 & 100 & 32 & 4 H20 96G & 44 TFLOPS \\
Transolver++ & 46.1 & 0.63 & 100 & 32 & 4 H20 96G & 44 TFLOPS \\
DAT & 14.3 & 0.73 & 100 & 32 & 4 H20 96G & 44 TFLOPS \\
TripNet & -- & -- & 200 & 32 & 4 H100 80G & 48 TFLOPS \\
\midrule
\ac{LRQ-Solver} (Ours) & \textbf{7.2} & \textbf{0.005} & 100 & 16 & 4 A100 PCIe 40G & 19.5 TFLOPS \\
\bottomrule
\end{tabular}}
\end{table*}

\subsubsection{3D Structural Linear Elasticity Problem}
\label{sec:3d_beam}

A linear elasticity system is governed by the following equations. The equilibrium equation describes the balance of stress and body forces:
\begin{equation}
\sigma_{ij,j} + F_i = 0,
\end{equation}
where $\sigma_{ij}$ is the stress tensor and $F_i$ represents body forces. The strain $\epsilon_{ij}$ is related to the displacement field $u_i$ through the kinematic relation:
\begin{equation}
\epsilon_{ij} = \frac{1}{2} \left( \frac{\partial u_i}{\partial x_j} + \frac{\partial u_j}{\partial x_i} \right).
\end{equation}
The constitutive behavior of the material follows Hooke's law for isotropic elasticity:
\begin{equation}
\sigma_{ij} = \frac{E}{1+\nu} \left( \epsilon_{ij} + \frac{\nu}{1-2\nu} \epsilon_{kk} \delta_{ij} \right),
\end{equation}
where $E$ is Young's modulus, $\nu$ is Poisson's ratio, and $\delta_{ij}$ is the Kronecker delta.

We focus on predicting the equivalent Von Mises stress, a critical indicator for assessing whether the applied load reaches the material's yield strength:
\begin{equation}
\sigma_{vM} = \sqrt{\frac{3}{2} \mathbf{S} : \mathbf{S}},
\end{equation}
where $\mathbf{S}$ is the deviatoric stress tensor.

We evaluate our method on the \textit{3D Beam} dataset, which contains finite element simulation results of elastic beams under various geometries, variable pressure loads, and fixed boundary conditions. The dataset includes point clouds at multiple resolutions (250–25k points), enabling resolution-agnostic evaluation. The label is the nodal von Mises stress field $\sigma_{vM}$, used to assess both accuracy and generalization. The right flat side of the beam is fully fixed (all degrees of freedom constrained), and a uniform pressure load is applied over the bottom half of the central hole. The material properties are: Young's modulus $E = 200~\text{GPa}$, Poisson's ratio $\nu = 0.3$, yield strength $380~\text{MPa}$, and hardening modulus $571.4~\text{MPa}$. The dataset consists of 3,000 unique configurations with varying geometric and loading inputs, partitioned into $75\%$ training, $5\%$ validation, and $20\%$ testing sets.

\paragraph{Accuracy Comparison}
We compare \ac{LRQ-Solver} against established baselines: DeepONet~\cite{deeponet}, Geom-DeepONet~\cite{geomdeeponet}, Transolver~\cite{wu2024transolver}, and RegDGCNN~\cite{drivaernetplus}. As shown in Table~\ref{tab:accuracy_3dbeam}, \ac{LRQ-Solver} achieves a MAE of $1.66~\text{MPa}$ on the full geometry, representing a $28.8\%$ improvement over the previous best result (Geom-DeepONet, $2.33~\text{MPa}$). The model also achieves the lowest error on the commonly used 5k-node subset, demonstrating superior fidelity in both localized and global stress prediction. Notably, RegDGCNN fails to run on the full geometry due to out-of-memory (OOM) on a single A100 GPU (40GB), highlighting its limited scalability. In contrast, \ac{LRQ-Solver} successfully processes the full point cloud with minimal error, indicating strong generalization and memory efficiency.

\begin{table}[H]
\centering
\caption{\textbf{Accuracy Comparison on the \textit{3D Beam} Dataset.} 
Bold values denote the best results; \underline{underlined} values denote the second-best. 
MAE is evaluated at two levels: \textbf{MAE$_{\text{subset}}$} on a 5k-node subset commonly used for benchmarking, and \textbf{MAE$_{\text{all}}$} over the full 3D volume for comprehensive assessment. 
OOM indicates out-of-memory on a single A100 GPU (40GB), meaning the model cannot process the full geometry.}
\label{tab:accuracy_3dbeam}
\resizebox{0.5\textwidth}{!}{
\begin{tabular}{l c c}
\toprule
Model & MAE$_{\text{subset}}$ (5k nodes) & MAE$_{\text{all}}$ (Full Volume) \\
\midrule
DeepONet & 7.14 & 7.16 \\
RegDGCNN & 34.75 & OOM \\
Transolver & 37.95 & 37.69 \\
Geom-DeepONet & \underline{2.33} & \underline{2.32} \\
\midrule
\ac{LRQ-Solver} (Ours) & \textbf{1.66} & \textbf{1.66} \\
\bottomrule
\end{tabular}
}
\end{table}

\paragraph{Efficiency and Scalability}
In terms of computational efficiency, \ac{LRQ-Solver} achieves unprecedented training speed, completing training in just \textbf{0.76 hours}—over 60× faster than DeepONet (27.5h) and 61× faster than Geom-DeepONet (46.9h). Despite the significant speedup, the model maintains competitive inference latency of \textbf{3.5 ms}, on par with DeepONet (2.4 ms) and Geom-DeepONet (2.7 ms), as shown in Table~\ref{tab:efficiency_3dbeam}. The efficiency gains stem from the covariance-based low-rank attention mechanism, which reduces computational complexity and memory footprint, enabling full-geometry processing within single-GPU memory limits. With only 2,000 epochs (vs. 150,000 for DeepONet variants), \ac{LRQ-Solver} achieves rapid convergence, making it highly suitable for iterative engineering design and real-time simulation workflows.

\begin{table}[H]
\centering
\caption{\textbf{Efficiency Comparison on the \textit{3D Beam} Dataset.} 
All models are evaluated with batch size 16 on a single A100 GPU (40GB). Inference time is measured per sample (ms). Training time is total wall-clock time in hours.}
\label{tab:efficiency_3dbeam}
\resizebox{0.5\textwidth}{!}{
\begin{tabular}{l c c c}
\toprule
Model & Training Time (h) & Inference Time (ms) & Epochs \\
\midrule
DeepONet & 27.5 & \textbf{2.4} & 150,000   \\
RegDGCNN & 18.4 & 16,836.6 & 2,000   \\
Transolver & 13.0 & 3,348.3 & 2,000  \\
Geom-DeepONet & 46.9 & 2.7 & 150,000  \\
\midrule
\ac{LRQ-Solver} (Ours) & \textbf{0.76} & 3.5 & 2,000  \\
\bottomrule
\end{tabular}
}
\end{table}

\subsection{Ablation Study}
\label{sec:ablation}

We conduct ablation studies on the \textit{DrivAerNet++} and \textit{3D Beam} datasets to evaluate the contribution of each component in \ac{LRQ-Solver}. The results, shown in Table~\ref{tab:ablation_study_drivaerpp} and Table~\ref{tab:ablation_study_beam}, are based on a baseline \ac{MLP} without \ac{LR-QA} or \ac{PCLM}, with components added incrementally.

Adding \ac{LR-QA} alone significantly improves accuracy on \textit{DrivAerNet++}, reducing MSE from 36.81 to 8.98 ($\times 10^5$) and MRE from 5.63\% to 2.82\%. On \textit{3D Beam}, \ac{LR-QA} reduces MAE from 3.01 to 1.99 MPa on both the 5k-node subset and full geometry. This demonstrates that the covariance-based low-rank attention mechanism effectively captures long-range physical interactions, leading to substantial gains in prediction fidelity, particularly in regions with complex flow structures or stress gradients.

In contrast, adding \ac{PCLM} alone yields a more moderate improvement on \textit{3D Beam} (MAE reduced to 2.37 MPa) but achieves the lowest MRE (2.22\%) on \textit{DrivAerNet++}, outperforming \ac{LR-QA} in relative error. This suggests that \ac{PCLM} is particularly effective at modeling global design dependencies, such as rear-end configurations and thickness variations, where system-level parameters strongly influence local physical behavior.

When both \ac{LR-QA} and \ac{PCLM} are combined in \ac{LRQ-Solver}, the model achieves the best performance on both datasets: MSE drops to 5.56 ($\times 10^{-5}$) on \textit{DrivAerNet++} and MAE reaches 1.66 MPa on \textit{3D Beam}, outperforming all ablated variants. The full model not only improves absolute accuracy but also maintains consistent performance across different evaluation granularities (subset vs. full volume), indicating robust generalization.

These results confirm that \ac{LR-QA} and \ac{PCLM} play complementary roles: \ac{LR-QA} enhances spatial coherence modeling, while \ac{PCLM} enables configuration-aware prediction. Neither component alone is sufficient to achieve optimal performance—only their integration enables \ac{LRQ-Solver} to simultaneously capture long-range physical interactions and global design effects, achieving state-of-the-art accuracy in multi-configuration industrial simulations.

\begin{table}[H]
\centering
\caption{Ablation study on the \textit{DrivAerNet++} dataset. Baseline is a \ac{MLP} without \ac{LR-QA} or \ac{PCLM}. `w/' denotes `with', indicating the addition of the corresponding component to the baseline model.}
\label{tab:ablation_study_drivaerpp}
\resizebox{0.5\textwidth}{!}{
\begin{tabular}{l c c c c}
\toprule
Model & $\text{MSE} \times 10^{-5}$ & $\text{MAE} \times 10^{-3}$ & $\text{Max AE} \times 10^{-2}$ & MRE \\
\midrule
Baseline (\ac{MLP})  & 36.81 & 14.64 & 6.78 & 5.63\% \\
\quad w/ LR-QA & 8.98 & 7.41 & 4.09 & 2.82\% \\
\quad w/ PCLM & 5.69 & 5.81 & 3.56 & \textbf{2.22\%} \\
\midrule
\ac{LRQ-Solver} (Ours) & \textbf{5.56} & \textbf{5.90} & \textbf{3.22} & 2.25\% \\
\bottomrule
\end{tabular}}
\end{table}

\begin{table}[h]
\centering
\caption{Ablation study on the \textit{3D Beam} dataset. Baseline is a \ac{MLP} without \ac{LR-QA} or \ac{PCLM}. `w/' denotes `with', indicating the addition of the corresponding component to the baseline model. 
\textbf{MAE$_{\text{subset}}$} is evaluated on the 5k-node subset; \textbf{MAE$_{\text{all}}$} is computed over the full 3D volume.}
\label{tab:ablation_study_beam}
\resizebox{0.5\textwidth}{!}{
\begin{tabular}{l c c}
\toprule
\textbf{Model} & \textbf{MAE$_{\text{subset}}$ (5k nodes)} & \textbf{MAE$_{\text{all}}$ (Full Volume)} \\
\midrule
Baseline (\ac{MLP}) & 3.01 & 3.07 \\
\quad w/ LR-QA  & 1.99 & 1.99 \\
\quad w/ PCLM & 2.37 & 2.38 \\
\midrule
\ac{LRQ-Solver} (Ours) & \textbf{1.66} & \textbf{1.66} \\
\bottomrule
\end{tabular}
}
\end{table}

\subsection{Discretization Invariance Analysis}
\label{sec:discretization_invariance}

\begin{table*}[htbp]
\centering
\caption{Performance of \ac{LRQ-Solver} on the \textit{DrivAerNet++} dataset across different point cloud sizes.}
\label{tab:point_num_exploration_drivaerpp}
\resizebox{\textwidth}{!}{%
\begin{tabular}{lcccccccc}
\toprule
\textbf{Point Numbers} & $\textbf{MSE} \times 10^{-5}$ & $\textbf{MAE} \times 10^{-3}$ & $\textbf{Max AE} \times 10^{-2}$ & \textbf{MRE} & \textbf{Training Time} & \textbf{Inference Time} & \textbf{Memory} & \textbf{FLOPs} \\
\midrule
1024    & 6.40 & 6.20 & 3.14 & 2.37\% & 0.3\,h & 0.005\,s & 0.03\,GB & 2.92\,G \\
4096    & 6.06 & 6.05 & 3.22 & 2.31\% & 0.4\,h & 0.005\,s & 0.08\,GB & 11.69\,G \\
8192    & 5.89 & 5.99 & 3.42 & 2.28\% & 0.6\,h & 0.005\,s & 0.14\,GB & 23.37\,G \\
16384   & 5.63 & \textbf{5.87} & \textbf{2.83} & \textbf{2.23}\% & 1.06\,h & 0.008\,s & 0.27\,GB & 46.74\,G \\
32768   & 5.62 & 5.88 & 3.06 & 2.24\% & 1.99\,h & 0.006\,s & 0.52\,GB & 93.48\,G \\
100000  & \textbf{5.56} & 5.90 & 3.22 & 2.25\% & 7.2\,h & 0.005\,s & 1.57\,GB & 285.29\,G \\
\bottomrule
\end{tabular}}
\end{table*}


\begin{table*}[htbp]
\centering
\caption{Discretization Invariance of baseline models and \ac{LRQ-Solver} on the 3D beam dataset. ``OOM'' denotes out-of-memory. Inference time is measured in milliseconds (ms) per sample. Training time is total wall-clock time in hours.}
\label{tab:beam_mae_various_pts}
\resizebox{\linewidth}{!}{
\begin{tabular}{lcccccccccc}
\hline
\textbf{Model} & \textbf{Metric} & \multicolumn{7}{c}{\textbf{Point Numbers}} & \textbf{Epochs} & \textbf{Training Time (h)} \\
 & & 250 & 1k & 2k & 5k & 10k & 25k & Full Volume & & \\
\hline
\multirow{2}{*}{DeepONet} 
    & MAE       & 7.14 & 7.14 & 7.14 & 7.14 & 7.14 & 7.14 & 7.16 & \multirow{2}{*}{150\,000} & \multirow{2}{*}{27.5} \\
    & Time (ms) & \textbf{2.1}  & \textbf{2.2}  & \textbf{2.3}  & \textbf{2.4}  & \textbf{2.5}  & \textbf{2.6}  & \textbf{2.7}  &        &  \\
\hline 
\multirow{2}{*}{RegDGCNN} 
    & MAE       & 129.8 & 96.08 & 49.48 & 34.75 & OOM & OOM & OOM & \multirow{2}{*}{2\,000} & \multirow{2}{*}{18.4} \\
    & Time (ms) & 1\,101.6 & 2\,494.0 & 4\,823.4 & 16\,836.6 & — & — & — &  &  \\
\hline
\multirow{2}{*}{Transolver} 
    & MAE       & 41.14 & 38.84 & 38.08 & 37.95 & 37.79 & 37.70 & 37.69 & \multirow{2}{*}{2\,000} & \multirow{2}{*}{13} \\
    & Time (ms) & 537.6 & 901.7 & 1\,586.4 & 3\,348.3 & 6\,572.6 & 15\,673.0 & 37\,097.0 &  &  \\
\hline
\multirow{2}{*}{GeomDeepONet} 
    & MAE       & 2.33 & 2.33 & 2.33 & 2.33 & 2.33 & 2.32 & 2.32 & \multirow{2}{*}{150\,000} & \multirow{2}{*}{46.9} \\
    & Time (ms) & 2.7  & 2.7  & 2.7  & 2.7  & 2.7  & 2.7  & 2.7  &         &  \\
\hline
\multirow{2}{*}{LRQ-Solver (Ours)}  
    & MAE       & \textbf{1.66} & \textbf{1.65} & \textbf{1.66} & \textbf{1.66} & \textbf{1.66} & \textbf{1.66} & \textbf{1.66} & \multirow{2}{*}{2\,000} & \multirow{2}{*}{0.76} \\  
    & Time (ms) & 3.5  & 3.5  & 3.5  & 3.5  & 3.5  & 3.5  & 3.5  &       &  \\
\hline
\end{tabular}
}
\end{table*}

Robustness to geometric discretization is a key requirement for neural PDE solvers in industrial applications, where simulations must operate reliably across multi-fidelity meshes—from coarse design prototypes to high-resolution validation models. We evaluate this property on two representative problems with distinct physical characteristics: the turbulent flow field around a vehicle (\textit{DrivAerNet++}) and the smooth stress distribution in a structural beam (\textit{3D Beam}). The results reveal how \ac{LRQ-Solver} adapts its predictive behavior to the underlying physics, maintaining high fidelity and efficiency across resolution scales.

On the \textit{DrivAerNet++} dataset, the flow field exhibits strong spatial gradients, boundary layers, and wake structures—features that benefit from higher point density. As shown in Table~\ref{tab:point_num_exploration_drivaerpp}, \ac{LRQ-Solver} achieves progressively better accuracy as resolution increases, reaching an MSE of $5.56 \times 10^5$ and MRE of 2.25\% at 100k points. This gradual improvement reflects the model’s ability to resolve fine-scale flow details with more data. Notably, even at very low resolutions (1k–4k points), the MRE remains below 2.37\%, indicating strong generalization and effective feature extraction from sparse inputs. Training time and memory grow sub-linearly, while inference latency stays below 8 ms across all scales, with most configurations running in just 5 ms.

In contrast, the \textit{3D Beam} dataset features a smoother, more globally coherent stress field governed by linear elasticity. Here, the optimal physical representation can be captured at low resolution, and additional points provide diminishing returns. As shown in Table~\ref{tab:beam_mae_various_pts}, \ac{LRQ-Solver} achieves near-perfect discretization invariance: the MAE stabilizes at 1.66 MPa from 250 to over 35,000 points, with no performance drift. This flat error curve demonstrates that the model learns a resolution-agnostic representation early and maintains it consistently—ideal for applications involving heterogeneous or adaptive meshing.

The stark difference in convergence behavior between the two datasets highlights a key strength of \ac{LRQ-Solver}: it does not impose a fixed inductive bias toward overfitting local neighborhoods. Instead, through the covariance-based low-rank attention and parameter-conditioned modeling, it focuses on global physical coherence. In turbulent flows, it leverages higher resolution to refine local gradients; in smooth fields, it avoids unnecessary complexity and preserves stability.

Moreover, \ac{LRQ-Solver} maintains consistent inference latency across both datasets—around 3.5–5 ms regardless of input size—while baselines like RegDGCNN and Transolver suffer from rapidly increasing runtime. This efficiency, combined with adaptive resolution handling, enables seamless deployment in real-world design workflows involving iterative optimization, multi-fidelity simulation, and geometry variation.

These results confirm that \ac{LRQ-Solver} is not only accurate and fast but also physically aware: it understands when more data matters and when it doesn’t, adapting its behavior to the nature of the physical field. This makes it uniquely suitable for general-purpose industrial simulation across diverse problem types.

We evaluate the robustness of \ac{LRQ-Solver} to point cloud resolution across a range of discretization densities from 250 to 25k points. As shown in Table~\ref{tab:beam_mae_various_pts}, our model maintains consistent accuracy with MAE stable around 1.66 MPa across all resolutions, demonstrating strong invariance to sampling density. In contrast, RegDGCNN and Transolver degrade as point count increases. \ac{LRQ-Solver} also maintains stable inference latency across resolutions, unlike RegDGCNN, whose runtime increases sharply with more points. This robustness enables reliable deployment in practical engineering scenarios where geometry representations vary widely. These results confirm that \ac{LRQ-Solver} generalizes well across discretization levels.

\subsection{Visualization}
\label{sec:visualization}

\begin{figure*}[htbp]
    \centering
    \includegraphics[width=0.9\textwidth]{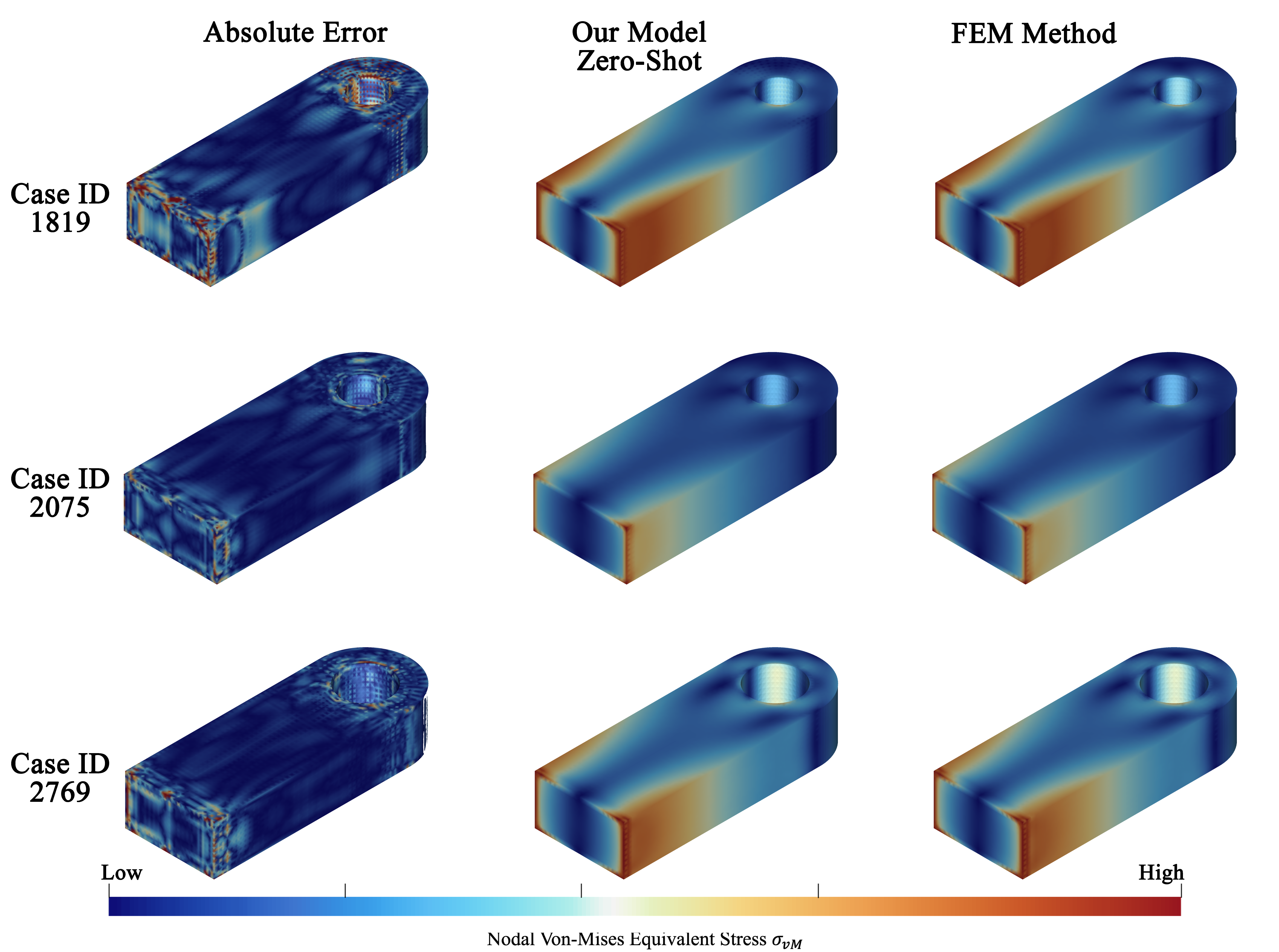}
    \caption{\textbf{Zero-Shot Model Prediction Result}: Our model predicts the nodal equivalent Von-Mises stresses over some case. The color scale indicates stress magnitude, with red regions corresponding to high stress concentrations near the hole.}
    \label{fig:beam_post}
\end{figure*}

\begin{figure*}[htbp]
    \centering
    \includegraphics[width=0.9\textwidth]{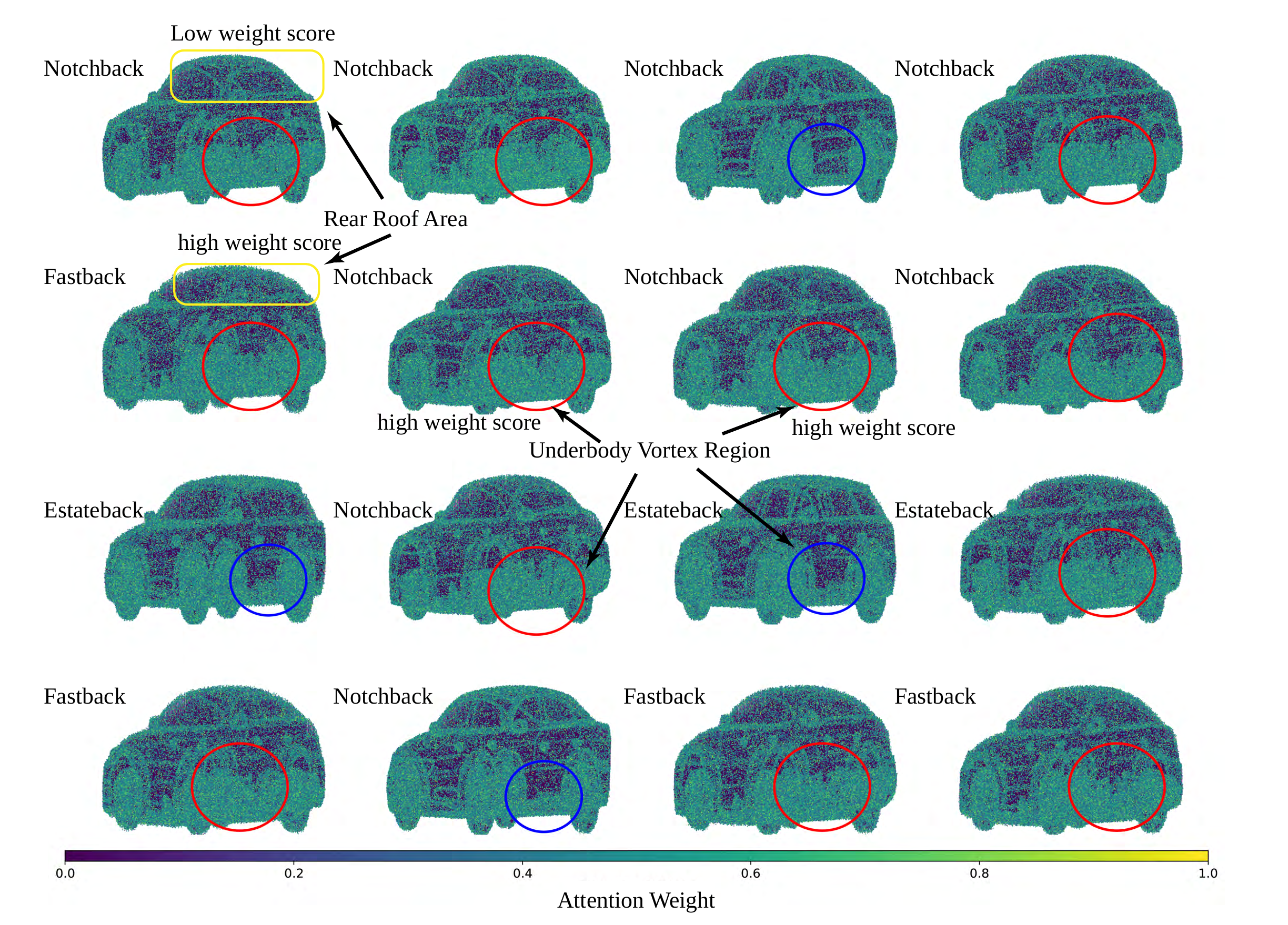}
    \caption{\textbf{Attention Heatmap Visualization}: Learned attention weights of \ac{LR-QA} during inference on a \textit{DrivAerNet++} sample. Warmer colors indicate stronger attention, revealing long-range interactions in the wake and local correlations near the surface.}
    \label{fig:drivaerpp_attention}
\end{figure*}

We visualize the predictive capabilities of \ac{LRQ-Solver} through two representative examples: stress field prediction on the \textit{3D Beam} dataset and attention mechanism analysis on the \textit{DrivAerNet++} dataset. 

Fig~\ref{fig:beam_post} visualizes the predicted nodal von Mises stress distribution for Test Cases on the \textit{3D Beam} dataset, alongside the \ac{FEM} ground truth and absolute error map. The color-coded cloud clearly reveals high-stress regions concentrated around the circular hole, consistent with classical mechanics predictions of stress concentration due to geometric discontinuity. The smooth gradient from the loaded region to the fixed end and sharp peak near the hole edge demonstrate that \ac{LRQ-Solver} accurately captures both global load transfer and local stress singularity. The absolute error map shows minimal deviation from \ac{FEM} results, with errors primarily localized near the hole boundary—expected due to stress gradients. This close agreement confirms that \ac{LRQ-Solver} produces physically plausible and accurate predictions in structural mechanics, capable of zero-shot generalization to unseen configurations without retraining.

Fig~\ref{fig:drivaerpp_attention} visualizes the learned attention weights of \ac{LR-QA} during inference on a \textit{DrivAerNet++} sample, with warmer colors indicating stronger attention. The heatmaps reveal that the model focuses primarily on regions critical to aerodynamic performance, particularly the rear end and wake area. Strong attention is observed around the trailing edge, roofline, and rear bumper—key locations where flow separation and pressure recovery occur. These areas directly influence the pressure drag component, which dominates total drag for ground vehicles. Additionally, local correlations near the surface suggest the model captures boundary layer behavior and skin friction effects. This spatial pattern aligns with fluid dynamics principles: the rear geometry governs wake structure and pressure distribution, while surface features affect flow attachment and turbulence. The consistent focus on these regions across multiple samples demonstrates that \ac{LRQ-Solver} learns physically meaningful attention patterns, enabling accurate prediction of drag coefficients by prioritizing the most influential geometric features.

\section{Conclusion}
\label{sec:conclusion}
We present \ac{LRQ-Solver}, a transformer-based neural operator for fast and accurate \ac{PDEs} Solving on complex 3D geometries at scale. To boost prediction accuracy across diverse design configurations, we introduce Parameter-Conditioned Lagrangian Modeling (\ac{PCLM}), which explicitly conditions local physical states on global parameters, enhancing physical consistency and reducing generalization error. To enable extreme computational efficiency, we propose covariance-based low-rank attention (\ac{LR-QA}), which reduces attention complexity from $O(N^2)$ to $O(NC^2 + C^3)$ by exploiting field covariance structure, eliminating point-wise clustering while preserving global coherence. Together, these innovations allow \ac{LRQ-Solver} to handle up to 2 million points on a single GPU, achieving a 38.9\% error reduction on \textit{DrivAerNet++} and 28.76\% on \textit{3D Beam}, with up to 50$\times$ faster training---demonstrating state-of-the-art accuracy, scalability, and efficiency for \ac{PDEs} Solving. The model exhibits strong discretization invariance and robustness to resolution and geometry variations, making it ideal for real-world engineering workflows. By embedding physics into the transformer backbone, \ac{LRQ-Solver} moves beyond black-box approximation, establishing a scalable, design-aware, and physics-informed paradigm for industrial-grade \ac{PDEs} Solving.

\section*{Acknowledgment}
The comparative models used in this study are publicly available and used in compliance with their respective licenses and ethical standards. We adhere rigorously to established research ethics throughout our work. As for the AI assistant, we utilize Qwen to identify textual errors and polish our paper and code.

\bibliographystyle{IEEEtran}
\bibliography{main}

\begin{thebibliography}{10}
\providecommand{\url}[1]{#1}
\csname url@samestyle\endcsname
\providecommand{\newblock}{\relax}
\providecommand{\bibinfo}[2]{#2}
\providecommand{\BIBentrySTDinterwordspacing}{\spaceskip=0pt\relax}
\providecommand{\BIBentryALTinterwordstretchfactor}{4}
\providecommand{\BIBentryALTinterwordspacing}{\spaceskip=\fontdimen2\font plus
\BIBentryALTinterwordstretchfactor\fontdimen3\font minus \fontdimen4\font\relax}
\providecommand{\BIBforeignlanguage}[2]{{%
\expandafter\ifx\csname l@#1\endcsname\relax
\typeout{** WARNING: IEEEtran.bst: No hyphenation pattern has been}%
\typeout{** loaded for the language `#1'. Using the pattern for}%
\typeout{** the default language instead.}%
\else
\language=\csname l@#1\endcsname
\fi
#2}}
\providecommand{\BIBdecl}{\relax}
\BIBdecl

\bibitem{vasil2024solar}
G.~M. Vasil, D.~Lecoanet, K.~Augustson, K.~J. Burns, J.~S. Oishi, B.~P. Brown, N.~Brummell, and K.~Julien, ``The solar dynamo begins near the surface,'' \emph{Nature}, vol. 629, no. 8013, pp. 769--772, 2024.

\bibitem{santos2025self}
G.~N. Santos-Dur{\'a}n, R.~L. Cooper, E.~Jahanbakhsh, G.~Timin, and M.~C. Milinkovitch, ``Self-organized patterning of crocodile head scales by compressive folding,'' \emph{Nature}, vol. 637, no. 8045, pp. 375--383, 2025.

\bibitem{basar2001nonlinear}
Y.~Basar, D.~Weichert, and J.~Petrolito, ``Nonlinear continuum mechanics of solids: fundamental mathematical and physical concepts,'' \emph{Applied Mechanics Reviews}, vol.~54, no.~6, pp. B98--B99, 2001.

\bibitem{evans2022partial}
L.~C. Evans, \emph{Partial differential equations}.\hskip 1em plus 0.5em minus 0.4em\relax American mathematical society, 2022, vol.~19.

\bibitem{spears2025predicting}
B.~K. Spears, S.~Brandon, D.~T. Casey, J.~E. Field, J.~A. Gaffney, K.~D. Humbird, A.~L. Kritcher, M.~K. Kruse, E.~Kur, B.~Kustowski \emph{et~al.}, ``Predicting fusion ignition at the national ignition facility with physics-informed deep learning,'' \emph{Science}, vol. 389, no. 6761, pp. 727--731, 2025.

\bibitem{solin2005partial}
P.~{\^S}ol{\'\i}n, \emph{Partial differential equations and the finite element method}.\hskip 1em plus 0.5em minus 0.4em\relax John Wiley \& Sons, 2005.

\bibitem{Grossmann2007}
\BIBentryALTinterwordspacing
\emph{The Finite Element Method}.\hskip 1em plus 0.5em minus 0.4em\relax Berlin, Heidelberg: Springer Berlin Heidelberg, 2007, pp. 173--315. [Online]. Available: \url{https://doi.org/10.1007/978-3-540-71584-9_4}
\BIBentrySTDinterwordspacing

\bibitem{li2021fourier}
\BIBentryALTinterwordspacing
Z.~Li, N.~B. Kovachki, K.~Azizzadenesheli, B.~liu, K.~Bhattacharya, A.~Stuart, and A.~Anandkumar, ``Fourier neural operator for parametric partial differential equations,'' in \emph{International Conference on Learning Representations}, 2021. [Online]. Available: \url{https://openreview.net/forum?id=c8P9NQVtmnO}
\BIBentrySTDinterwordspacing

\bibitem{gino}
\BIBentryALTinterwordspacing
Z.~Li, N.~Kovachki, C.~Choy, B.~Li, J.~Kossaifi, S.~Otta, M.~A. Nabian, M.~Stadler, C.~Hundt, K.~Azizzadenesheli, and A.~Anandkumar, ``Geometry-informed neural operator for large-scale 3d pdes,'' in \emph{Advances in Neural Information Processing Systems}, A.~Oh, T.~Naumann, A.~Globerson, K.~Saenko, M.~Hardt, and S.~Levine, Eds., vol.~36.\hskip 1em plus 0.5em minus 0.4em\relax Curran Associates, Inc., 2023, pp. 35\,836--35\,854. [Online]. Available: \url{https://proceedings.neurips.cc/paper_files/paper/2023/file/70518ea42831f02afc3a2828993935ad-Paper-Conference.pdf}
\BIBentrySTDinterwordspacing

\bibitem{li2025geometric}
X.~Li, Z.~Li, N.~Kovachki, and A.~Anandkumar, ``Geometric operator learning with optimal transport,'' \emph{arXiv preprint arXiv:2507.20065}, 2025.

\bibitem{pmlr-v202-hao23c}
\BIBentryALTinterwordspacing
Z.~Hao, Z.~Wang, H.~Su, C.~Ying, Y.~Dong, S.~Liu, Z.~Cheng, J.~Song, and J.~Zhu, ``{GNOT}: A general neural operator transformer for operator learning,'' in \emph{Proceedings of the 40th International Conference on Machine Learning}, ser. Proceedings of Machine Learning Research, A.~Krause, E.~Brunskill, K.~Cho, B.~Engelhardt, S.~Sabato, and J.~Scarlett, Eds., vol. 202.\hskip 1em plus 0.5em minus 0.4em\relax PMLR, 23--29 Jul 2023, pp. 12\,556--12\,569. [Online]. Available: \url{https://proceedings.mlr.press/v202/hao23c.html}
\BIBentrySTDinterwordspacing

\bibitem{liu2025geometry}
Q.~Liu, W.~Zhong, H.~Meidani, D.~Abueidda, S.~Koric, and P.~Geubelle, ``Geometry-informed neural operator transformer,'' \emph{arXiv preprint arXiv:2504.19452}, 2025.

\bibitem{geomdeeponet}
\BIBentryALTinterwordspacing
J.~He, S.~Koric, D.~Abueidda, A.~Najafi, and I.~Jasiuk, ``Geom-deeponet: A point-cloud-based deep operator network for field predictions on 3d parameterized geometries,'' \emph{Computer Methods in Applied Mechanics and Engineering}, vol. 429, p. 117130, 2024. [Online]. Available: \url{https://www.sciencedirect.com/science/article/pii/S0045782524003864}
\BIBentrySTDinterwordspacing

\bibitem{luo2025transolverplus}
\BIBentryALTinterwordspacing
H.~Luo, H.~Wu, H.~Zhou, L.~Xing, Y.~Di, J.~Wang, and M.~Long, ``Transolver++: An accurate neural solver for {PDE}s on million-scale geometries,'' in \emph{Forty-second International Conference on Machine Learning}, 2025. [Online]. Available: \url{https://openreview.net/forum?id=AM7iAh0krx}
\BIBentrySTDinterwordspacing

\bibitem{deeponet}
L.~Lu, P.~Jin, G.~Pang, Z.~Zhang, and G.~E. Karniadakis, ``Learning nonlinear operators via deeponet based on the universal approximation theorem of operators,'' \emph{Nature machine intelligence}, vol.~3, no.~3, pp. 218--229, 2021.

\bibitem{tran2021factorized}
A.~Tran, A.~Mathews, L.~Xie, and C.~S. Ong, ``Factorized fourier neural operators,'' \emph{arXiv preprint arXiv:2111.13802}, 2021.

\bibitem{li2023fourier}
Z.~Li, D.~Z. Huang, B.~Liu, and A.~Anandkumar, ``Fourier neural operator with learned deformations for pdes on general geometries,'' \emph{Journal of Machine Learning Research}, vol.~24, no. 388, pp. 1--26, 2023.

\bibitem{bonev2023spherical}
B.~Bonev, T.~Kurth, C.~Hundt, J.~Pathak, M.~Baust, K.~Kashinath, and A.~Anandkumar, ``Spherical fourier neural operators: Learning stable dynamics on the sphere,'' in \emph{International conference on machine learning}.\hskip 1em plus 0.5em minus 0.4em\relax PMLR, 2023, pp. 2806--2823.

\bibitem{liu2025ehance}
C.~Liu, D.~Murari, C.~Budd, L.~Liu, and C.-B. Schönlieb, ``Enhancing fourier neural operators with local spatial features,'' \emph{arXiv preprint arXiv:2503.17797}, 2025.

\bibitem{liu2025difffno}
X.~Liu and H.~Tang, ``Difffno: Diffusion fourier neural operator,'' in \emph{Proceedings of the Computer Vision and Pattern Recognition Conference}, 2025, pp. 150--160.

\bibitem{xiao2024amortized}
Z.~Xiao, S.~Kou, H.~Zhongkai, B.~Lin, and Z.~Deng, ``Amortized fourier neural operators,'' \emph{Advances in Neural Information Processing Systems}, vol.~37, pp. 115\,001--115\,020, 2024.

\bibitem{liu2024kan}
\BIBentryALTinterwordspacing
Z.~Liu, Y.~Wang, S.~Vaidya, F.~Ruehle, J.~Halverson, M.~Soljacic, T.~Y. Hou, and M.~Tegmark, ``{KAN}: Kolmogorov{\textendash}arnold networks,'' in \emph{The Thirteenth International Conference on Learning Representations}, 2025. [Online]. Available: \url{https://openreview.net/forum?id=Ozo7qJ5vZi}
\BIBentrySTDinterwordspacing

\bibitem{wang2021learning}
S.~Wang, H.~Wang, and P.~Perdikaris, ``Learning the solution operator of parametric partial differential equations with physics-informed deeponets,'' \emph{Science advances}, vol.~7, no.~40, p. eabi8605, 2021.

\bibitem{he2023novel}
\BIBentryALTinterwordspacing
J.~He, S.~Koric, S.~Kushwaha, J.~Park, D.~Abueidda, and I.~Jasiuk, ``Novel deeponet architecture to predict stresses in elastoplastic structures with variable complex geometries and loads,'' \emph{Computer Methods in Applied Mechanics and Engineering}, vol. 415, p. 116277, 2023. [Online]. Available: \url{https://www.sciencedirect.com/science/article/pii/S0045782523004012}
\BIBentrySTDinterwordspacing

\bibitem{sitzmann2020implicit}
\BIBentryALTinterwordspacing
V.~Sitzmann, J.~Martel, A.~Bergman, D.~Lindell, and G.~Wetzstein, ``Implicit neural representations with periodic activation functions,'' in \emph{Advances in Neural Information Processing Systems}, H.~Larochelle, M.~Ranzato, R.~Hadsell, M.~Balcan, and H.~Lin, Eds., vol.~33.\hskip 1em plus 0.5em minus 0.4em\relax Curran Associates, Inc., 2020, pp. 7462--7473. [Online]. Available: \url{https://proceedings.neurips.cc/paper_files/paper/2020/file/53c04118df112c13a8c34b38343b9c10-Paper.pdf}
\BIBentrySTDinterwordspacing

\bibitem{he2024sequential}
\BIBentryALTinterwordspacing
J.~He, S.~Kushwaha, J.~Park, S.~Koric, D.~Abueidda, and I.~Jasiuk, ``Sequential deep operator networks (s-deeponet) for predicting full-field solutions under time-dependent loads,'' \emph{Engineering Applications of Artificial Intelligence}, vol. 127, p. 107258, 2024. [Online]. Available: \url{https://www.sciencedirect.com/science/article/pii/S0952197623014422}
\BIBentrySTDinterwordspacing

\bibitem{cai2021deepm}
\BIBentryALTinterwordspacing
S.~Cai, Z.~Wang, L.~Lu, T.~A. Zaki, and G.~E. Karniadakis, ``Deepmmnet: Inferring the electroconvection multiphysics fields based on operator approximation by neural networks,'' \emph{Journal of Computational Physics}, vol. 436, p. 110296, 2021. [Online]. Available: \url{https://www.sciencedirect.com/science/article/pii/S0021999121001911}
\BIBentrySTDinterwordspacing

\bibitem{li2023finite}
\BIBentryALTinterwordspacing
T.~Li, S.~Zou, L.~Chang, Xinghuaand~Zhang, and X.~Deng, ``Predicting unsteady incompressible fluid dynamics with finite volume informed neural network,'' \emph{Physics of Fluids}, vol.~36, no.~4, p. 043601, 04 2024. [Online]. Available: \url{https://doi.org/10.1063/5.0197425}
\BIBentrySTDinterwordspacing

\bibitem{pmlr-v202-li23q}
\BIBentryALTinterwordspacing
J.~Li, D.~Li, S.~Savarese, and S.~Hoi, ``{BLIP}-2: Bootstrapping language-image pre-training with frozen image encoders and large language models,'' in \emph{Proceedings of the 40th International Conference on Machine Learning}, ser. Proceedings of Machine Learning Research, A.~Krause, E.~Brunskill, K.~Cho, B.~Engelhardt, S.~Sabato, and J.~Scarlett, Eds., vol. 202.\hskip 1em plus 0.5em minus 0.4em\relax PMLR, 23--29 Jul 2023, pp. 19\,730--19\,742. [Online]. Available: \url{https://proceedings.mlr.press/v202/li23q.html}
\BIBentrySTDinterwordspacing

\bibitem{kipf2016semi}
T.~Kipf, ``Semi-supervised classification with graph convolutional networks,'' \emph{arXiv preprint arXiv:1609.02907}, 2016.

\bibitem{drivaernetplus}
\BIBentryALTinterwordspacing
M.~Elrefaie, F.~Morar, A.~Dai, and F.~Ahmed, ``Drivaernet++: A large-scale multimodal car dataset with computational fluid dynamics simulations and deep learning benchmarks,'' in \emph{Advances in Neural Information Processing Systems}, A.~Globerson, L.~Mackey, D.~Belgrave, A.~Fan, U.~Paquet, J.~Tomczak, and C.~Zhang, Eds., vol.~37.\hskip 1em plus 0.5em minus 0.4em\relax Curran Associates, Inc., 2024, pp. 499--536. [Online]. Available: \url{https://proceedings.neurips.cc/paper_files/paper/2024/file/013cf29a9e68e4411d0593040a8a1eb3-Paper-Datasets_and_Benchmarks_Track.pdf}
\BIBentrySTDinterwordspacing

\bibitem{wu2024transolver}
\BIBentryALTinterwordspacing
H.~Wu, H.~Luo, H.~Wang, J.~Wang, and M.~Long, ``Transolver: A fast transformer solver for {PDE}s on general geometries,'' in \emph{Forty-first International Conference on Machine Learning}, 2024. [Online]. Available: \url{https://openreview.net/forum?id=Ywl6pODXjB}
\BIBentrySTDinterwordspacing

\bibitem{he2025drivaer}
J.~He, X.~Luo, and Y.~Wang, ``Drivaer transformer: A high-precision and fast prediction method for vehicle aerodynamic drag coefficient based on the drivaernet++ dataset,'' \emph{arXiv preprint arXiv:2504.08217}, 2025.

\bibitem{chen2025tripnet}
Q.~Chen, M.~Elrefaie, A.~Dai, and F.~Ahmed, ``Tripnet: Learning large-scale high-fidelity 3d car aerodynamics with triplane networks,'' \emph{arXiv preprint arXiv:2503.17400}, 2025.

\end{thebibliography}

\begin{IEEEbiography}[{\includegraphics[width=1in,height=1.25in,clip,keepaspectratio]{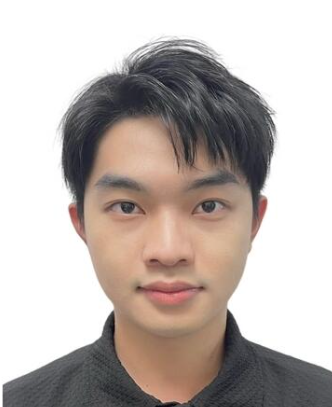}}]{Peijian Zeng}
received the B.S. degree in mechanical design, manufacturing and automation from Zhaoqing University, Zhaoqing, China, in 2018, and the M.S. degree in computer science and technology from Guangdong University of Technology, Guangzhou, China, in 2022. He is currently pursuing the Ph.D. degree in computer science and technology at Guangdong University of Technology. His research focuses on the application of artificial intelligence in computational mechanics and computational fluid dynamics.
\end{IEEEbiography}

\begin{IEEEbiography}[{\includegraphics[width=1in,height=1.25in,clip,keepaspectratio]{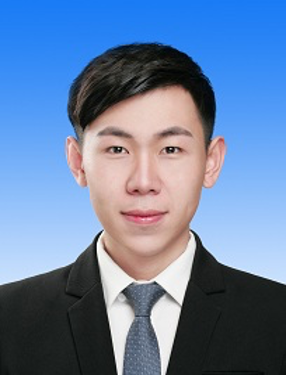}}]{Wang Guan}
received the B.S. in Naval Architecture from Harbin Institute of Technology, China in 2014, and the M.S. in Computational Mechanics from École Centrale de Nantes, France in 2020. He is currently a Senior Algorithm Engineer at Baidu, researching AI for Computational Fluid Dynamics and Partial Differential Equations.
\end{IEEEbiography}

\begin{IEEEbiography}[{\includegraphics[width=1in,height=1.25in,clip,keepaspectratio]{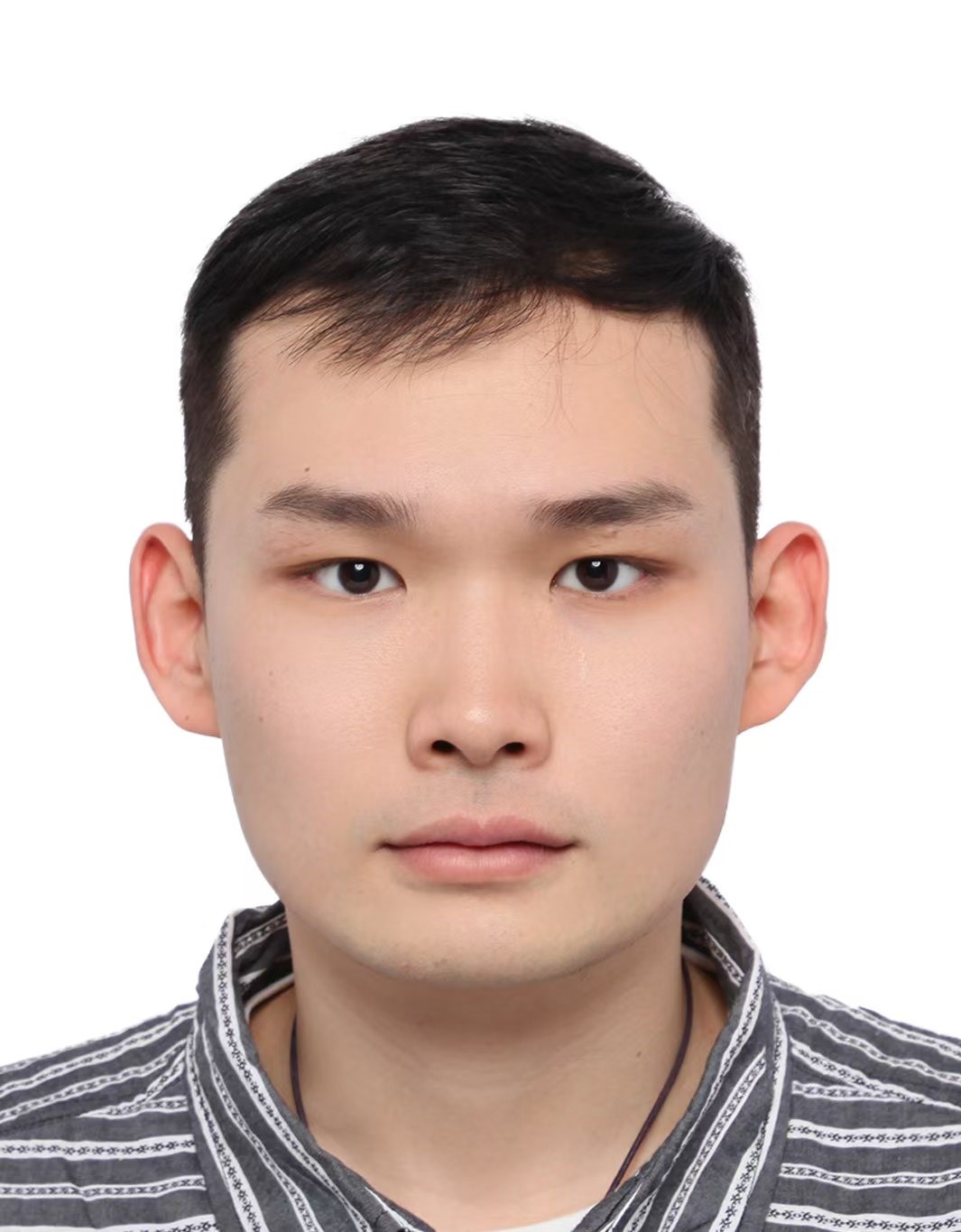}}]{Haohao Gu}
received the B.S. degree in Mechanics (Energy and Resource Engineering) from the Peking University, Beijing, China in 2019, and the Ph.D. degree in Mechanics from the Peking University, Beijing, China in 2024. He is currently an algorithm engineer with Beijing Baidu Netcom Science Technology Co., Ltd. His research interests focus on Artificial Intelligence for Energy Science and Engineering, Deep Generative Models.
\end{IEEEbiography}

\begin{IEEEbiography}[{\includegraphics[width=1in,height=1.25in,clip,keepaspectratio]{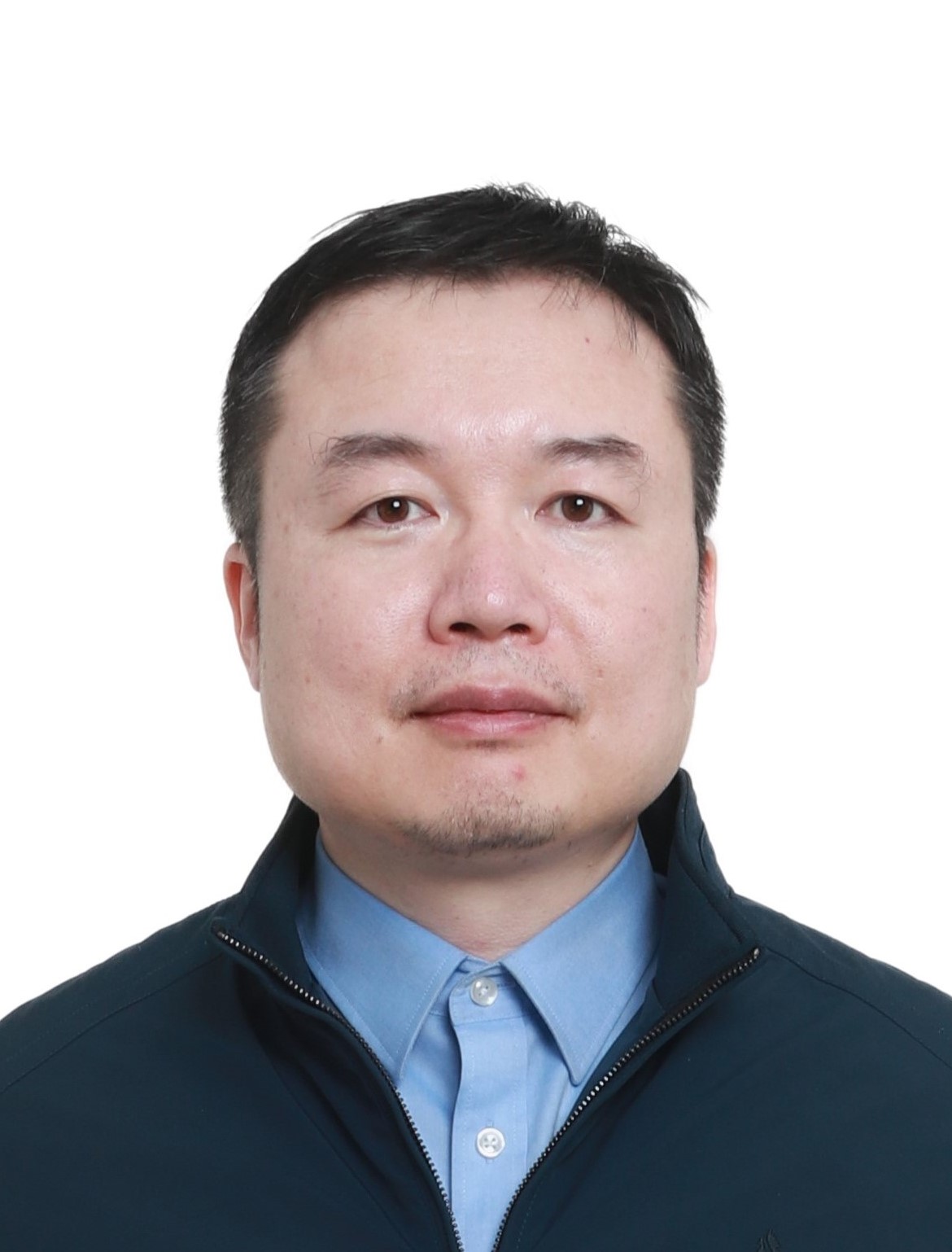}}]{Xiaoguang Hu}
received his Master's degree in Computer Science from Harbin Institute of Technology in 2006. He currently serves as a Distinguished Architect at Baidu, with research interests encompassing Natural Language Processing (NLP), Deep Learning Frameworks, and AI for Science.
\end{IEEEbiography}

\begin{IEEEbiography}[{\includegraphics[width=1in,height=1.25in,clip,keepaspectratio]{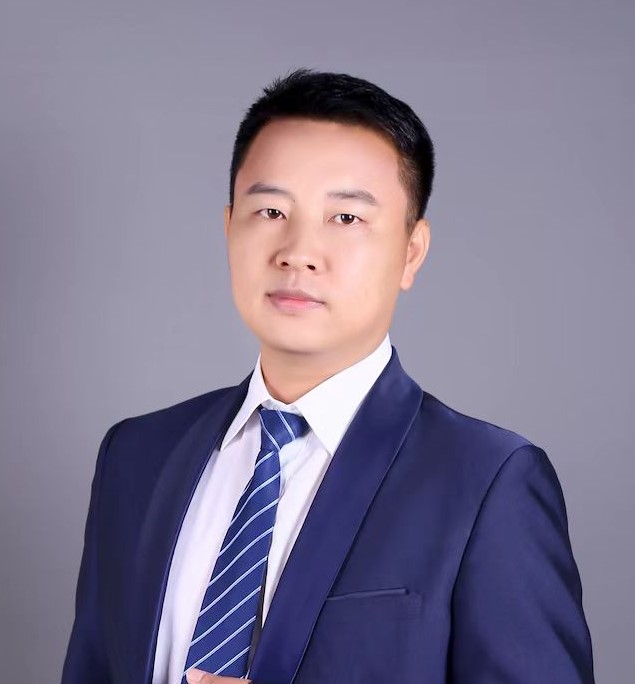}}]{Tiezhu Gao}
received the M.S. degree in Computer Science and Technology from Harbin Engineering University in 2008. He currently serves as a Senior Technical Manager at Baidu, where he leads research and development efforts on the PaddlePaddle deep learning framework.
\end{IEEEbiography}

\begin{IEEEbiography}[{\includegraphics[width=1in,height=1.25in,clip,keepaspectratio]{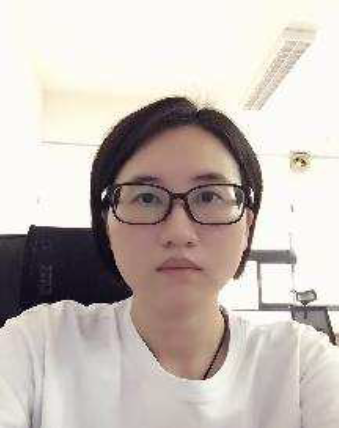}}]{Zhuowei Wang}
received the B.S. degree in com-puter science and technology from the China Univer-sity of Geosciences, Wuhan, China, in 2007, and the M.S. and Ph.D. degrees in computer systems archi-tecture from Wuhan University, Wuhan, in 2009 and 2012, respectively. From 2019 to 2020, she worked as a Visiting Scholar with the Norwegian University of Science and Technology, Gj\o{}vik, Norway. She is currently a Professor with the School of Computer Science and Technology, Guangdong University of Technology, Guangzhou, China. Her research interests focus on high-performance computing, low-power optimization, and distributed systems.
\end{IEEEbiography}

\begin{IEEEbiography}[{\includegraphics[width=1in,height=1.25in,clip,keepaspectratio]{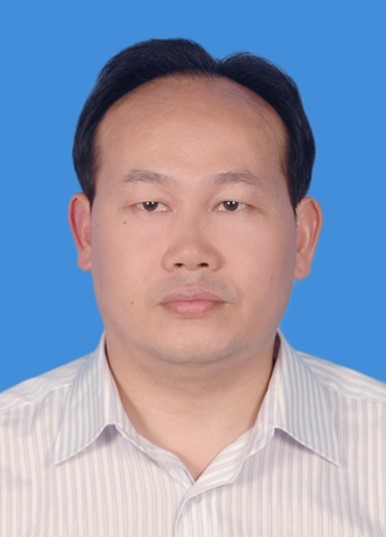}}]{Aimin Yang}
received the B.S. degree in physics from Hunan University of Science and Technology, Xiangtan, China, in 1993, the M.S. degree in computer science from the National University of Defense Technology, Changsha, China, in 2001, and the Ph.D. degree in computer software from Fudan University, Shanghai, China, in 2005. He is currently a Professor with the School of Computer Science, Guangdong University of Technology, Guangzhou, China. His research interests include the application of artificial intelligence in structural dynamics and natural language processing.
\end{IEEEbiography}

\begin{IEEEbiography}[{\includegraphics[width=1in,height=1.25in,clip,keepaspectratio]{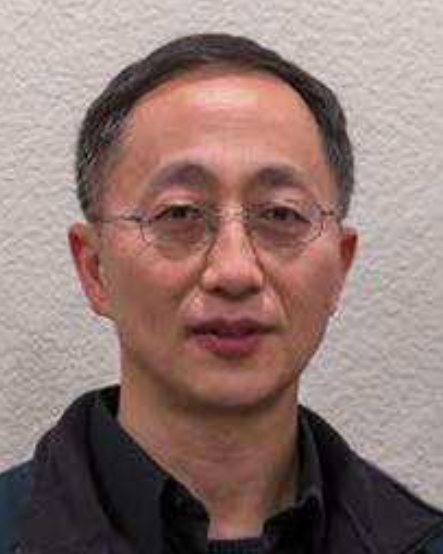}}]{Xiaoyu Song}
received the Ph.D. degree from the University of Pisa, Italy, in 1991. From 1992 to 1998, he was on the faculty at the University of Montreal, Canada. He joined the Department of Electrical and Computer Engineering at Portland State University in 1998, where he is now a Pro-fessor. He was an editor of IEEE Transactions on VLSI Systems and IEEE Transactions on Circuits and Systems. He was awarded an Intel Faculty Fellowship from 2000 to 2005. His research interests include formal methods, design automation, embedded systems and emerging technologies.
\end{IEEEbiography}

\end{document}